\documentclass[a4paper]{article}

\usepackage{hyperref}
\usepackage{fullpage}
\usepackage{setspace}
\usepackage[UKenglish]{babel}
\usepackage{amssymb, amsfonts, amsmath, amsthm}
\usepackage{newpxtext,newpxmath}
\usepackage{mathtools}
\usepackage{enumitem}
\usepackage{bm}
\usepackage{algorithm}
\usepackage{algpseudocodex}
\usepackage{natbib}
\usepackage{xcolor}
\usepackage{url}
\usepackage{authblk}
\usepackage[capitalize]{cleveref}

\algrenewcommand\algorithmicrequire{\textbf{Input:}}
\algrenewcommand\algorithmicensure{\textbf{Output:}}

\theoremstyle{plain}
\newtheorem{theorem}{Theorem}
\newtheorem{proposition}{Proposition}
\newtheorem{lemma}{Lemma}
\newtheorem{corollary}{Corollary}

\newtheorem{observation}{Observation}
\theoremstyle{definition}
\newtheorem{example}{Example}

\newtheorem{assumption}{Assumption}

\crefname{assumption}{Assumption}{Assumptions}
\crefname{observation}{Observation}{Observations}
\crefname{claim}{Claim}{Claims}
\crefname{objective}{Objective}{Objectives}

\newcommand{\ab}{\bm{a}}

\newcommand{\xb}{\bm{x}}
\newcommand{\yb}{\bm{y}}
\newcommand{\zb}{\bm{z}}

\newcommand{\qb}{\bm{q}}
\newcommand{\ssb}{\bm{s}}
\newcommand{\eb}{\bm{e}}
\newcommand{\pb}{\bm{p}}
\newcommand{\rb}{\bm{r}}

\newcommand{\bids}{{\mathcal{B}}}
\newcommand{\bid}{{\bm{b}}}
\newcommand{\auction}{\mathcal{A}}
\newcommand{\market}{\mathcal{M}}

\usepackage{xcolor}

\newcommand{\R}{\mathbb{R}}

\DeclareMathOperator{\argmax}{argmax}
\DeclareMathOperator{\conv}{conv}

\begin{document}

\title{
Solving the Arctic Product-Mix Auction

\bigskip

\textsc{preliminary draft --- comments welcome}}

\author[1]{Elizabeth Baldwin\thanks{elizabeth.baldwin@economics.ox.ac.uk}}
\author[1]{Paul Klemperer\thanks{paul.klemperer@nuffield.ox.ac.uk}}
\author[2]{Edwin Lock\thanks{edwin.lock@kcl.ac.uk}}

\affil[1]{University of Oxford}
\affil[2]{King's College London}

\maketitle

\onehalfspacing

\begin{abstract}
Arctic product-mix auctions allow budget-constrained bidders to express preferences over multiple substitute goods. We give the  first strongly polynomial algorithm to find the competitive equilibria when the auctioneer has separable convex piecewise-linear costs. It combines a strongly polynomial reduction to the costless Arctic auction with an existing strongly polynomial algorithm for costless Arctic auctions. We also give a polynomial-time Turing reduction from costless Arctic auctions to linear Fisher markets, and a strongly polynomial many-to-one reduction to linear Arrow--Debreu exchange markets. If a good is sold in positive quantity in two competitive equilibria, its price is the same in both.

\par\noindent\textbf{Keywords:} Arctic auction, Product-Mix auction, Fisher market, quasi-linear Fisher market, budget constraint, auction, bidding, competitive equilibrium, strongly-polynomial algorithm, product mix auction.

\par\noindent\textbf{JEL classification:}
D44 (Auctions);
C63 (Computational Techniques; Simulation Modeling);
D47 (Market Design);
D51 (Exchange and Production Economies).
\end{abstract}

\section{Introduction}
\label{section:introduction}

The Arctic product-mix auction was developed for the Icelandic government in 2015 to allow the owners of the ``offshore'' accounts that it had blocked during the financial crisis to exchange their funds for alternative financial assets. Multiple substitute assets---cash, and bonds of various durations and terms, denominated in either kr\'onur or euros---would be offered. The government was therefore interested in using a version of the Bank of England's (BoE's) product-mix auction, an auction in which the auctioneer and the bidders are able to express rich preferences about how their trade-offs between substitute assets depend on the quantities that they, respectively, allocate or are allocated. An important distinction between the Icelandic government's context and the BoE's, however, was that the owner of any blocked account had a fixed quantity of funds available. That is, the Icelandic auction involved budget-constrained bidders.%
\footnote{The standard and Arctic product-mix auctions were developed by Klemperer (\textit{pro bono}) for the BoE and the Icelandic government, respectively.  The government hired dotEcon to program and test the Arctic design, and publicly announced in June 2015 that it would run an (Arctic) product-mix auction. See \citep{Klemperer2008,Klemperer2010,Klemperer} and \citet[Section 5.4]{BaldwinKlempererLock2024} for details of the BoE's product-mix auction, \citet[Appendix IB]{Klemperer} and~\citet[Section 5.5]{BaldwinKlempererLock2024} for more information about the Arctic auction, and \citet[Appendix IIA]{Klemperer} for more details of the Icelandic context. Lock's implementation of the auction is at \url{http://pma.nuff.ox.ac.uk/}.}

A budget-constrained bidder cares about price \textit{ratios}. If she has budget $\beta$, values asset $J$ at $r_J$ per unit, and faces price $p_J$, she would obtain utility $(r_J-p_J)(\beta/p_J)$ by exchanging her blocked funds for asset $J$. She therefore prefers to exchange her funds for asset $A$ rather than for asset $B$ if
\[
    (r_A-p_A)\frac{\beta}{p_A}
    > (r_B-p_B)\frac{\beta}{p_B},
    \qquad\text{or, equivalently,}\qquad
    \frac{r_A}{p_A}>\frac{r_B}{p_B}.
\]
By contrast, a bidder with quasi-linear preferences---that is, without a budget constraint, as better approximates the BoE's bidders---cares about price \textit{differences}: she prefers a unit of $A$ to a unit of $B$ if $r_A-p_A>r_B-p_B$.\footnote{In both the budget-constrained and quasi-linear preferences cases, the bidder prefers exchanging asset $A$ to no trade only if $r_A>p_A$.}

To be efficient, the bidding language for Iceland's auction therefore had to be modified from that used by the BoE. The auction was named ``Arctic'' both to reflect its planned use and because its bidders' preferences cannot be described ``tropically'', unlike those of the BoE's bidders.%
\footnote{Tropical-geometric methods apply to the quasi-linear preferences expressed by the standard product-mix auction language and yield, among other things, results on the existence of competitive equilibrium (see \citet{BaldwinKlemperer2014,BaldwinKlemperer2019}). Hard budget constraints take preferences outside that framework.} The Arctic auction was not, in the end, used in Iceland because political circumstances changed,%
\footnote{Detailed rules were to be published in April 2016, but a political crisis that month led to early elections and the plan to run the auction was subsequently abandoned. \citet[Appendix IIA]{Klemperer} provides a fuller history of the proposed implementation.} 
but it has subsequently been proposed by IMF staff for restructuring debts in other contexts.%
\footnote{\citet{willems2021auction} sets out an auction-based sovereign debt restructuring mechanism based on the arctic auction: creditors bid to exchange their claims for alternative debt instruments, while the debtor---and potentially outsiders offering new money---chooses supply schedules across those instruments. \citet[Appendix IIB]{Klemperer} discusses Mexico's ``Brady bond'' restructuring which let creditors choose among discount bonds, par bonds and new money, but pre-determined the terms and relative pricing of the three options; he explains that an arctic auction that set the relative prices and quantities in the light of the creditors' and the debtor's expressed preferences would have been more efficient. \par \citet[Appendix IIB]{Klemperer} also discusses other applications, such as using an arctic auction to offer the shareholders of a firm that is being acquired or restructured a choice of alternative securities in return for their shares—this would be more efficient than using a standard mechanism to compensate the firm’s shareholders.}

The algorithm originally developed and programmed for the Icelandic implementation, and subsequently analysed by \citet{Fichtl2022}, searches for candidate prices using the geometry of bidders' demands. It is practical when the number of goods is small---such as three or four---and the number of bids is not too large, but it would not be efficient in larger contexts.\footnote{The procedure finds the auctioneer's revenue-maximising envy-free outcomes at the same time as finding the competitive equilibria, but may need to examine $\mathcal B^n n!$ price vectors, where $n$ is the number of goods and $\mathcal B$ is the number of bids.} 

This paper's main contribution is to provide, to our knowledge, the first strongly polynomial algorithm for computing a competitive equilibrium of an Arctic auction with separable convex piecewise-linear costs---equivalently, separable stepwise-increasing marginal costs.%
\footnote{``Strongly polynomial'' means that the number of arithmetic operations is bounded by a polynomial in the numbers of goods, bids, and marginal-cost steps, independently of the number of digits used to represent the input.}
Our key step is a strongly polynomial reduction, the ``cost-removal reduction'', from the Arctic auction with costs to the Arctic auction without costs.%
\footnote{An Arctic auction without costs is also sometimes called a quasi-linear Fisher market.} 
Combined with \citet{GargTaherijamVazirani2026}'s strongly polynomial algorithm for the Arctic auction without costs, this yields our result. 

Separable piecewise-linear costs are a natural finite computational model: any continuous convex function can be approximated arbitrarily closely by a convex piecewise-linear function. They are also a natural way for an auctioneer to express how much of each asset it is willing to supply at each price.

\citet{GargLockVazirani2026} give the only other polynomial-time algorithm for solving the Arctic auction with (separable convex piecewise-linear) costs, but its running time depends on the bit length of the input, so it is \textit{not} strongly polynomial. Our reduction is also conceptually simpler, and we expect it to lead to more efficient implementations in practice.

We also provide a second path to a strongly polynomial algorithm, namely we follow our (strongly-polynomial) ``cost-removal reduction'' by first a strongly-polynomial ``exchange-market reduction'' from a costless Arctic auction to a linear Arrow--Debreu exchange market, and then the strongly polynomial exchange-market algorithm of \citet{GargVegh2023}.

Our third contribution, the ``Fisher-market reduction'' from Arctic auctions without costs to linear Fisher markets, allows alternative combinatorial, convex-programming, and price-adjustment methods to solve Arctic auctions with separable piecewise-linear costs, using existing Fisher market algorithms \citep{DPSV,Orlin2010,ChenJiangSo2025}.

Our reductions exploit the connection between Arctic auctions and standard market models. In particular, in an ordinary (linear) Fisher market, supplies are fixed and buyers spend fixed budgets on goods for which they have linear values. On the bidders' side, an Arctic auction is a quasi-linear Fisher market: a bidder may keep some of her budget when no purchase provides sufficient value. We can represent this choice in an ordinary Fisher market by adding a cash good. If both its value and its price are $1$, receiving one unit of cash is exactly equivalent to leaving one unit of the bidder's budget unspent. In this particular case, therefore, the Fisher-market-with-cash-good is equivalent to an Arctic auction without costs. 

The auction with costs, however, is \textit{not} equivalent to an ordinary Fisher market, because the quantity supplied depends on the prices. So a critical part of our contributions is the ``cost-removal reduction'' from the Arctic auction with costs to one without costs. 

The ``cost-removal reduction'' proceeds by dividing each good into separate \emph{varieties}, one for each step of its marginal-cost curve, and gives each variety a fixed supply equal to the length of that step. The original bidders regard all varieties of the same good as perfect substitutes. For each positive-cost variety, we add a synthetic ``buyback'' bid that wants only that variety, values it at its marginal cost, and has just enough budget to take the whole step at that price. A price below the marginal cost would make this bid demand more than the available supply and so cannot occur in equilibrium. At the marginal cost, the bid can absorb whatever regular bidders do not buy; above the marginal cost, it demands nothing and the regular bidders receive the entire step. The buyback bid therefore represents the auctioneer retaining units that are not worth selling.

These buyback bids  transform the auction with stepped marginal costs for each good into one with fixed supplies and no explicit costs: the auctioneer's supply decision has been expressed through the buyback bids. At equilibrium, lower-cost varieties are sold to regular bidders before higher-cost varieties, and the price of an original good is recovered as the lowest price among its varieties. (All varieties purchased by true (i.e., not synthetic) bids will be sold at this price.) Collapsing the varieties back into goods preserves both the bidders' demands and the auctioneer's competitive equilibrium allocation. \Cref{theorem:reduction-costs-costless} formalises this strongly polynomial ``cost-removal reduction'', and combining it with \citet{GargTaherijamVazirani2026}'s strongly polynomial algorithm for an Arctic auction without costs gives a strongly polynomial algorithm for the auction with costs. 

An alternative route to solving the auction with costs is to follow the ``cost-removal reduction'' by our ``Fisher-market reduction'' which maps an Arctic auction without costs to a linear Fisher market [(\cref{theorem:fisher-reduction})]. If at the Fisher equilibrium for the original goods without cash every bidder's most-preferred real good(s) give at least as much value per unit of budget as retaining her money, that equilibrium already solves the Arctic auction without costs. Otherwise, we add the cash good described above and binary-search for a supply at which its equilibrium price is exactly $1$. This gives a polynomial-time Turing reduction, although the search time may depend on the bit length of the input. Adding any existing Fisher market algorithm then completes an algorithm for an Arctic auction with costs, though such an algorithm is polynomial, not strongly polynomial.

A second way to obtain a \textit{strongly} polynomial algorithm for an Arctic auction with costs is by going via a standard linear Arrow--Debreu exchange market rather than via a Fisher market. In an exchange market, both buyers and sellers are endowed with real goods, and cannot buy more than they can afford from selling their endowment.  To transform the Arctic auction without costs into this setting, we need add only one good, money, and one agent, the seller.  Buyers are endowed with money according to their original budgets, while the seller is endowed with the supply of all other goods. The homogeneity of exchange-market prices allows the price of money to be normalised to $1$.  Money is either retained by buyers, or ``bought'' by the seller, so this avoids the search for the correct cash supply: its supply is simply the sum of original budgets.  Thus we obtain our ``exchange-market reduction'' (\cref{theorem:exchange-reduction}), which is strongly polynomial as we have added only one good and one agent. So combining the ``cost-removal reduction'' with our ``exchange-market reduction'' and  \citet{GargVegh2023}'s exchange-market algorithm yields a second strongly polynomial algorithm for an Arctic auction with costs.

Finally, we observe that since every Arctic auction without costs has an equilibrium (see, e.g., \citet{FinsterGoldbergLock2025}), our cost-removal reduction establishes the existence of equilibrium for Arctic auctions with (separable convex piecewise-linear) costs (\cref{corollary:existence-with-costs}). And we show the equilibrium prices in the Arctic auction with costs are essentially unique: any good that is sold in strictly positive quantity has the same price in all equilibria (\cref{proposition:price-uniqueness}). 

The paper proceeds as follows. Section~2 defines the Arctic auction and the Fisher market. \Cref{section:cost-reduction} gives the ``cost-removal reduction'' from the Arctic auction to an auction without costs (\cref{theorem:reduction-costs-costless}). \Cref{section:exchange-reduction} gives the reduction from an auction without costs to an exchange market (\cref{theorem:exchange-reduction}), and \cref{section:fisher-reduction} gives the reduction from an auction without costs to a Fisher market (\cref{theorem:fisher-reduction}). \Cref{section:uniqueness} studies the uniqueness of equilibrium prices.

\section{The Arctic Product-Mix Auction}

In the Arctic product-mix auction, or simply Arctic auction, bidders compete for goods $[n]\coloneqq\{1,\ldots,n\}$. A \textit{bundle} is a vector $\xb \in \R^n_{\geq 0}$ specifying the quantity of each good it contains.

Bidders each submit one or more bids. An \textit{arctic bid} $(\beta;\rb)$ consists of a budget $\beta \in\R_{\geq 0}$ and a root vector $\rb$ whose coordinates $r_i \in \R_{\geq 0}$ are the values the bid places on the goods $i \in [n]$. The bid can be interpreted as an agent with quasilinear utility defined by linear valuation $v(\xb) = \rb \cdot \xb$ and budget $\beta$. So at prices $\pb > \bm{0}$, the bidder has utility $\rb \cdot \xb - \pb \cdot \xb$ for all non-negative bundles such that $\pb \cdot \xb \leq \beta$ and utility $-\infty$ for all other bundles.

The bid's demand $D(\pb)$ consists of the utility-maximising bundles, so
\[
    D(\pb) \coloneqq \argmax_{\xb \in \R^n_{\geq 0},\ \pb \cdot \xb \leq \beta} (\rb - \pb) \cdot \xb;
\]
spending the bid's budget $\beta$ only on good $i$ obtains $\frac{\beta}{p_i}$ units of that good, so utility equal to $( \frac{r_i}{p_i}-1 ) \beta$. 

\begin{observation}
\label{observation:bang-per-buck}
When $\beta>0$, a positive amount of good $i$ can occur in a demanded bundle at prices $\pb > \bm{0}$ if and only if $r_i \geq p_i$ and $i$ maximises the ``bang per buck'', $\frac{r_i}{p_i}$, among all the goods $i \in [n]$.
\end{observation}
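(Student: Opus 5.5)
The plan is to turn the bid's optimisation into a problem about how it splits its budget across goods. At prices $\pb>\bm{0}$, write any feasible bundle in terms of spending $s_i = p_i x_i \geq 0$ on each good, subject to $\sum_i s_i \leq \beta$. The objective then becomes
\[
(\rb-\pb)\cdot\xb = \sum_{i\in[n]} \Bigl(\frac{r_i}{p_i}-1\Bigr) s_i .
\]
So the demand problem is a linear programme over the scaled simplex $\{\ssb\geq\bm{0} : \sum_i s_i\leq\beta\}$. Its coefficients are $c_i \coloneqq r_i/p_i - 1$, and leaving money unspent acts as an implicit option with coefficient $0$.

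Let $M \coloneqq \max_i c_i$. The optimal value is $\beta\max(M,0)$, and I will characterise the optimal $\ssb$ by comparing each feasible $\ssb$ with this bound.

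\textbf{Optimal value.} For any feasible $\ssb$,
\[
\sum_i c_i s_i \;\leq\; \max(M,0)\sum_i s_i \;\leq\; \beta\max(M,0),
\]
and the bound is attained by spending all of $\beta$ on a good with $c_i=M$ (if $M\geq 0$), or by spending nothing (if $M<0$). Equality in the first inequality forces $s_i=0$ for every $i$ with $c_i<\max(M,0)$.

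\textbf{Necessity.} If $s_i>0$ in an optimal solution, then $c_i=\max(M,0)$. This gives $c_i=M$, so $i$ maximises $r_i/p_i$, and also $c_i\geq 0$, so $r_i\geq p_i$.

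\textbf{Sufficiency.} Suppose $r_i\geq p_i$ and $i$ maximises bang per buck. Then $c_i=M\geq 0$, so spending the whole budget on $i$ attains $\beta M=\beta\max(M,0)$. Hence $\xb=(\beta/p_i)\eb_i$ lies in $D(\pb)$. Since $\beta>0$, this bundle contains a positive amount of good $i$.

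The only delicate point is the boundary case $r_i=p_i$ with $M=0$. There the coefficient on $i$ ties with the unspent-money option, so demand is not unique, but a positive amount of $i$ is still demanded. This case is exactly why the statement uses $\geq$ rather than $>$, and why it says ``can occur'' rather than ``must occur''. The hypothesis $\beta>0$ is used only in the sufficiency step, to make the demanded quantity strictly positive.
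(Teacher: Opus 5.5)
Your proof is correct and is essentially the argument the paper relies on: the paper gives no formal proof, only the preceding remark that spending the whole budget $\beta$ on good $i$ yields utility $(r_i/p_i-1)\beta$, and your change of variables from quantities to expenditures $p_ix_i$ makes that remark rigorous. It turns demand into a linear programme with coefficients $r_i/p_i-1$, plus an implicit zero-coefficient option of keeping money. (One cosmetic point: the paper already uses $\ssb$ for supplies, so a different symbol for expenditures would avoid a clash.)
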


A bidder who prefers a portfolio of different goods can divide her budget across multiple separate bids; the demand of a set of arctic bids is simply the Minkowski sum of each bid's demand. Since each bid has its own budget constraint and is optimised independently, we henceforth treat every submitted bid as a separate bidder and denote the resulting bids by $\bid^1, \ldots, \bid^m$.

The auctioneer has separable costs for supplying the goods. For a bundle $\yb$, write
\[
C(\yb)=\sum_{i \in [n]} C_i(y_i),
\]
where each $C_i$ is a convex piecewise-linear cost with $K_i\geq 1$ marginal-cost steps, which increase with $y_i$. For good $i$, step $k \in [K_i]$ has width $s_{ik}>0$ and marginal cost $c_{ik}\geq 0$. Adjacent equal-cost steps are merged, so
\[
0 \leq c_{i1}, \qquad c_{ik}<c_{i,k+1}\quad(k=1,\ldots,K_i-1).
\]
Let $S_{i0}=0$ and $S_{ik}=\sum_{h=1}^k s_{ih}$. The final breakpoint $s_i\coloneqq S_{iK_i}$ is the supply cap for good $i$. If $y_i\in [S_{i,l-1},S_{il}]$, then
\[
C_i(y_i)
\coloneqq
\sum_{k<l} c_{ik}s_{ik}
+ c_{il}(y_i-S_{i,l-1}).
\]
For $y_i>s_i$, set $C_i(y_i)=\infty$.
The Arctic auction without costs is the special case in which $C_i(y_i)=0$ for $0\leq y_i\leq s_i$ and $C_i(y_i)=\infty$ for $y_i>s_i$. Thus the supply cap remains, but there is no cost for selling any feasible quantity.

Given prices $\pb$, the auctioneer's profit from selling bundle $\yb$ is
\[
\pb\cdot\yb-C(\yb)
= \sum_{i \in [n]} \left(p_i y_i-C_i(y_i)\right).
\]
An allocation $\xb^1,\ldots,\xb^m$ sells aggregate bundle $\yb=\sum_{j=1}^m\xb^j$. A competitive equilibrium consists of prices $\pb$ and allocations such that every bid receives a demanded bundle and
\[
\yb \in \argmax_{\zb\in \R^n_{\geq 0}} \left\{ \pb\cdot\zb-C(\zb) \right\}.
\]

Throughout, we make the following standing assumption on the bids of an Arctic auction.

\begin{assumption}
\label{assumption:budgets-values}
Without loss of generality, we assume that every bid has positive value for at least one good (otherwise we could remove the bid), and for each good $i \in [n]$ there is a bid $\bid^j$ with $\beta^j > 0$ and $r^j_i > 0$ (otherwise we could remove the good).
\end{assumption}

A competitive equilibrium always exists under \cref{assumption:budgets-values}; this is established in \citet{FinsterGoldbergLock2025} for the costless case, and our first reduction extends existence to the case with stepwise marginal costs (\cref{corollary:existence-with-costs}). Moreover, competitive equilibrium prices must be strictly positive. Finally, in the Arctic auction without costs, the auctioneer maximises profit by selling the full supply of every good.

\subsection{The Fisher Market}

A \textit{linear Fisher market} consists of goods $[n]$ with supplies $\ssb \in \R^n_{>0}$ and a set $1, \ldots, m$ of \textit{bidders}. Each bidder has a linear valuation $v(\xb) = \widetilde{\rb} \cdot \xb$ and a positive budget~$\widetilde{\beta}$, and has no value for any unused budget: at prices $\pb > \bm{0}$, the bidder's utility is $\widetilde{\rb} \cdot \xb$ for all bundles such that $\pb \cdot \xb \leq \widetilde{\beta}$, and $-\infty$ for all other bundles. We assume that every Fisher bidder has positive value for at least one good and that every good is valued positively by some bidder. We also denote a Fisher bidder by $(\widetilde{\beta};\widetilde{\rb})$. The demand of bidder $(\widetilde{\beta};\widetilde{\rb})$ at prices $\pb$ is
\[
\widetilde{I}(\pb) \coloneqq \argmax_{i \in [n]} \frac{\widetilde{r}_i}{p_i},
\qquad
\widetilde{D}(\pb) = \conv\left\{ \frac{\widetilde{\beta}}{p_i}\eb^i \mid i \in \widetilde{I}(\pb) \right\} \subset \R^n.
\]
Here $\eb^i$ denotes the $i$-th standard unit vector.
A competitive equilibrium of the Fisher market consists of prices $\pb \in \R^n_{>0}$ and allocations such that each bidder receives a bundle in its demand set and each good's supply is fully sold. Competitive equilibrium prices must be strictly positive.

\section{Reducing the Auction with Costs to the Auction without Costs}
\label{section:cost-reduction}

We show how an algorithm for solving the Arctic auction without costs can be used to solve the Arctic auction with costs. This is done via a reduction: a given auction with costs is turned into an auction without costs, and a solution to the latter is mapped back to a solution of the former.

The reduction removes the auctioneer's supply costs by turning each step of a good's marginal cost curve into a separate variety of that good. Regular bidders are made indifferent between the varieties belonging to the same original good, while additional buyback bids encode the auctioneer's willingness not to sell units whose price is below their marginal cost. In this way, the cost curve is represented entirely via ordinary bids and fixed supplies, so the resulting auction has no explicit seller costs.

Equilibria of the transformed auction have the following structure: lower-cost varieties are sold to regular bidders before higher-cost varieties, and the price of an original good can be recovered as the minimum price among its varieties. Once prices and allocations are collapsed back from varieties to goods, bidders still demand the same bundles and the auctioneer sells a profit-maximising quantity for each good at the given prices. Thus solving the transformed auction without costs provides a solution to the original auction with stepwise marginal costs.

\subsection{The Reduction}
Let $\auction$ be an Arctic auction with goods $1, \ldots, n$, stepwise marginal costs $(C_i)_{i \in [n]}$, and $m$ bidders with bids $\bid^1, \ldots, \bid^m$. We construct an auction $\widehat{\auction}$ without costs from the auction $\auction$ with costs.

In the new auction $\widehat{\auction}$, we refer to the objects to be sold as \textit{varieties}, to distinguish them from the goods in $\auction$. Specifically, for each good $i$ in the original auction, we create variety $(i,k)$ for each step $k \in [K_i]$ of $C_i$. So the total set of varieties in the new auction is $\{(i,k) \mid i \in [n], k \in [K_i] \}$. Each variety $(i,k)$ is available in supply $s_{ik}$ (the width of the $k$-th step of $C_i$).

$\widehat{\auction}$ has regular bids and buyback bids. For each bid $\bid^j = (\beta^j; \rb^j)$ in $\auction$, we create a \textit{regular bid} $\widehat{\bid}^j = (\widehat{\beta}^j; \widehat{\rb}^j)$ in $\widehat{\auction}$. This bid extends $\bid^j$ from goods to all varieties simply by making the bid indifferent between all varieties of the same good, and setting all these varieties' values to the bid's original value for the good. Formally, for each $j \in [m]$, we set the same budget, $\widehat{\beta}^j = \beta^j$, and define
\[
    \widehat{r}^j_{ik} \coloneqq r^j_{i} \quad \forall i \in [n], k \in [K_i].
\]

For each variety $(i,k)$ with $c_{ik}>0$, we also create a \textit{buyback bid} $\widehat{\bid}^{ik}$ that has value $c_{ik}$ for this variety and is not interested in any other varieties. Its budget $c_{ik}s_{ik}$ is exactly enough to purchase the entire supply $s_{ik}$ of variety $(i,k)$ at the marginal cost price $c_{ik}$, so
$
\widehat{\bid}^{ik} \coloneqq (c_{ik}s_{ik}; c_{ik} \eb^{ik}).
$
Here $\eb^{ik}$ denotes the standard unit vector corresponding to variety $(i,k)$.  Note that we do not need buyback bids for varieties with $c_{ik}=0$: equilibrium prices are strictly positive so the auctioneer always wants to sell the full supply $s_{ik}$.  (Also, a buyback bid for $c_{ik}=0$, following the pattern for $c_{ik}>0$, would have budget $0$ and value vector $\bm{0}$, which violates \cref{assumption:budgets-values}.)

The transformed auctioneer incurs no cost for supplying any variety $(i,k)$ up to its cap $s_{ik}$.

Finally, we check that $\widehat{\auction}$ is a legitimate instance.

\begin{observation}
\label{observation:well-formed}
The auction $\widehat{\auction}$ satisfies \cref{assumption:budgets-values}, so it is a valid Arctic auction without costs and has a competitive equilibrium.
\end{observation}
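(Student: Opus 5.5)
The plan is to verify the two clauses of \cref{assumption:budgets-values} for $\widehat{\auction}$ directly. Once they hold, existence of a competitive equilibrium follows from the cited result of \citet{FinsterGoldbergLock2025} for costless Arctic auctions. That $\widehat{\auction}$ has no costs is immediate from the construction, since each variety $(i,k)$ has zero cost up to its cap $s_{ik}$ and infinite cost beyond it.

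\textbf{First clause: every bid values some variety.} There are two kinds of bid.
\begin{itemize}
\item For a regular bid $\widehat{\bid}^j$, the original bid $\bid^j$ satisfies \cref{assumption:budgets-values} in $\auction$, so $r^j_i>0$ for some good $i$. Then $\widehat{r}^j_{ik}=r^j_i>0$ for every $k\in[K_i]$, and such a $k$ exists because $K_i\geq 1$.
\item For a buyback bid $\widehat{\bid}^{ik}$, the value on variety $(i,k)$ is $c_{ik}>0$, because buyback bids are created only when $c_{ik}>0$.
\end{itemize}

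\textbf{Second clause: every variety is valued by some bid with positive budget.} Fix a variety $(i,k)$. By \cref{assumption:budgets-values} applied to $\auction$, there is a bid $\bid^j$ with $\beta^j>0$ and $r^j_i>0$. The corresponding regular bid then has $\widehat{\beta}^j=\beta^j>0$ and $\widehat{r}^j_{ik}=r^j_i>0$. Buyback bids are not needed here, although they would also serve whenever $c_{ik}>0$, since their budget $c_{ik}s_{ik}$ is positive because $s_{ik}>0$.

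\textbf{Remaining checks.} The supplies $s_{ik}$ are positive and finite, and all budgets and values are nonnegative. The only place where care is needed is the buyback bids with $c_{ik}=0$, which would violate the first clause; the construction excludes exactly these. This is the sole potential obstacle, and it is already handled by the definition.
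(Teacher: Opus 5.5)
Your proposal is correct and follows essentially the same route as the paper's proof: you check both clauses of \cref{assumption:budgets-values} for the regular bids (inherited from $\auction$) and the buyback bids ($c_{ik}>0$), then invoke \citet{FinsterGoldbergLock2025} for existence. Your extra remarks, that buyback bids are unnecessary for the second clause and that the $c_{ik}=0$ case is excluded by construction, are consistent with the paper.
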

\begin{proof}
The auction $\auction$ satisfies \cref{assumption:budgets-values}. So, by construction of the reduction, every regular bid has positive value for at least one variety, and each variety has at least one regular bid with a positive budget and a positive value for the variety. Moreover, every buyback bid has positive value for its variety by construction. Hence $\widehat{\auction}$ satisfies \cref{assumption:budgets-values}. Existence of a competitive equilibrium now follows from \citet{FinsterGoldbergLock2025}.
\end{proof}

\subsection{Properties of \texorpdfstring{$\widehat{\auction}$}{Â}}

Throughout, we will let $(\widehat{\pb}, \widehat{\xb})$ denote a competitive equilibrium of $\widehat{\auction}$. Here $\widehat{\pb}$ are the prices for all the varieties, $\widehat{\xb}^j$ is the allocation to regular bid $\widehat{\bid}^j$, and, when $c_{ik}>0$, $\widehat{\xb}^{ik}$ is the allocation to buyback bid $\widehat{\bid}^{ik}$. We also denote the aggregate allocation to the regular bids by $\widehat{\zb} \coloneqq \sum_{j \in [m]} \widehat{\xb}^j$.

We establish initial properties about competitive equilibria of $\widehat{\auction}$.

\begin{observation}
\label{observation:buyback-implication}
For every variety $(i,k)$, we have $\widehat{p}_{ik} \geq c_{ik}$, and if $\widehat{p}_{ik}>c_{ik}$ then $\widehat{z}_{ik}=s_{ik}$.
\end{observation}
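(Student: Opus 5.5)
We split according to whether $c_{ik}=0$ or $c_{ik}>0$. In both cases we use three facts about a competitive equilibrium $(\widehat{\pb},\widehat{\xb})$ of $\widehat{\auction}$. First, $\widehat{\auction}$ is an Arctic auction without costs, so every equilibrium price is strictly positive. Second, the auctioneer sells the full supply of every variety. Third, the bang-per-buck characterisation of \cref{observation:bang-per-buck} applies to each bid.

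\emph{Case $c_{ik}=0$.} Here $\widehat{p}_{ik}>0=c_{ik}$, so the first claim is immediate. No buyback bid exists for $(i,k)$, so the regular bids are the only buyers of this variety. Full supply is sold, so $\widehat{z}_{ik}=s_{ik}$, which proves the second claim.

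\emph{Case $c_{ik}>0$.} For the inequality, suppose for contradiction that $\widehat{p}_{ik}<c_{ik}$. The buyback bid $\widehat{\bid}^{ik}=(c_{ik}s_{ik};c_{ik}\eb^{ik})$ values only variety $(i,k)$. Its bang per buck on that variety is $c_{ik}/\widehat{p}_{ik}>1$, so its utility $(c_{ik}-\widehat{p}_{ik})x$ strictly increases in $x$. Its demand is therefore unique: it spends its whole budget on $(i,k)$ and receives $c_{ik}s_{ik}/\widehat{p}_{ik}>s_{ik}$ units. This exceeds the supply cap $s_{ik}$, contradicting feasibility, since allocations sum to a bundle with cost $<\infty$. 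Hence $\widehat{p}_{ik}\ge c_{ik}$.

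For the implication, suppose $\widehat{p}_{ik}>c_{ik}$. Then the buyback bid has strictly negative utility from any positive amount of its only valued variety, and zero value for all other varieties. Its unique demanded bundle is therefore $\bm 0$, so $\widehat{\xb}^{ik}=\bm 0$. The regular bids and this buyback bid are the only buyers of $(i,k)$, and the full supply $s_{ik}$ is sold. It follows that $\widehat{z}_{ik}=s_{ik}$.

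The only delicate point is justifying that the full supply of every variety is sold. This is the paper's stated fact for costless auctions: prices are strictly positive and there are no costs up to the cap, so profit $\widehat{\pb}\cdot\zb$ is strictly increasing in each coordinate and is maximised at $\zb$ equal to the supply vector. We rely on this fact and on the strict positivity of prices, both of which hold by \cref{observation:well-formed}.
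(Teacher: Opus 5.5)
Your proof is correct and follows essentially the same route as the paper. Strict positivity of prices handles $c_{ik}=0$. The buyback bid's over-demand at $\widehat p_{ik}<c_{ik}$ gives the inequality, and its zero demand at $\widehat p_{ik}>c_{ik}$, together with full sale of every variety's supply, gives $\widehat z_{ik}=s_{ik}$; you simply make explicit why full supply is sold, which the paper compresses into ``market clearing''.
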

\begin{proof}
For $c_{ik}>0$, if $\widehat{p}_{ik}< c_{ik}$ then the buyback bid $\widehat{\bid}^{ik}$ demands more than the available supply. For $c_{ik}=0$, then $\widehat{p}_{ik} \geq c_{ik}$ follows from the strict positivity of equilibrium prices. Finally, if $\widehat{p}_{ik}>c_{ik}$ then no buyback bid demands variety $(i,k)$ (and none was created when $c_{ik}=0$), so market clearing implies $\widehat{z}_{ik}=s_{ik}$.
\end{proof}

\begin{lemma}
\label{lemma:min-price}
Every variety $(i,k)$ with $\widehat{z}_{ik} > 0$ satisfies $\widehat{p}_{ik} = \min_{k' \in [K_i]} \widehat{p}_{ik'}$.
\end{lemma}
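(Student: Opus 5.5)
Since $\widehat{z}_{ik} = \sum_{j \in [m]} \widehat{x}^j_{ik} > 0$, there is some regular bid $\widehat{\bid}^j$ with $\widehat{x}^j_{ik} > 0$. I will argue from that bid's demand via \cref{observation:bang-per-buck}.

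First I check that $\widehat{\beta}^j > 0$. A bid with zero budget can only afford the zero bundle at strictly positive prices, and equilibrium prices are strictly positive. So $\widehat{x}^j_{ik} > 0$ forces $\widehat{\beta}^j > 0$, and \cref{observation:bang-per-buck} applies in $\widehat{\auction}$. It says that variety $(i,k)$ maximises the bang per buck $\widehat{r}^j_{i'k'}/\widehat{p}_{i'k'}$ over all varieties $(i',k')$, and that $\widehat{r}^j_{ik} \geq \widehat{p}_{ik} > 0$. In particular $\widehat{r}^j_{ik} = r^j_i > 0$.

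Next I compare $(i,k)$ only with the other varieties $(i,k')$ of the same good. By construction of the regular bids, $\widehat{r}^j_{ik'} = r^j_i$ for every $k' \in [K_i]$. Maximality of the bang per buck therefore gives
\[
\frac{r^j_i}{\widehat{p}_{ik}} \;\geq\; \frac{r^j_i}{\widehat{p}_{ik'}} \qquad \text{for all } k' \in [K_i].
\]
Dividing by $r^j_i > 0$ and using that all prices are positive, this becomes $\widehat{p}_{ik} \leq \widehat{p}_{ik'}$ for all $k'$. Hence $\widehat{p}_{ik} = \min_{k'} \widehat{p}_{ik'}$.

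The only step needing care is ensuring the division is legitimate, namely $r^j_i > 0$ and $\widehat{\beta}^j > 0$. Both come from the observation and from positivity of equilibrium prices, so I expect no real obstacle.
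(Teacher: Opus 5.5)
Your proposal is correct and follows essentially the same route as the paper. You pick a regular bid holding $(i,k)$, use that it values every variety of good $i$ equally at $r^j_i>0$, and conclude from bang-per-buck maximality that no variety of good $i$ is strictly cheaper. The differences are cosmetic: you get $r^j_i>0$ from the condition $\widehat r^j_{ik}\geq\widehat p_{ik}>0$ in \cref{observation:bang-per-buck}, after checking $\widehat\beta^j>0$, where the paper uses a direct utility comparison. You also argue directly rather than by contradiction.
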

\begin{proof}
Some regular bid demands $(i,k)$, and this bid has positive value for good $i$: a bid with zero value for good $i$ obtains negative utility from a positive amount of $(i,k)$ and utility zero from buying nothing. This bid values all varieties of good $i$ equally, so a strictly cheaper variety of good $i$ offers strictly higher bang per buck, contradicting the demand of $(i,k)$ (cf.~\cref{observation:bang-per-buck}). Hence $\widehat{p}_{ik}$ is minimal among the prices of the varieties of good $i$.
\end{proof}

\begin{lemma}
\label{lemma:monotonic-z}
Fix some good $i$. If $\widehat{z}_{ik} = 0$ for all $k \in [K_i]$, then $\widehat{p}_{ik} = c_{ik}$ for all $k \in [K_i]$. Otherwise, let $l \in [K_i]$ be the largest index for which $\widehat{z}_{il} > 0$. Then:
\begin{enumerate}[label=(\roman*)]
    \item \label{item:cutoff-structure} For every $k<l$, we have $\widehat{z}_{ik}=s_{ik}$ and $\widehat{p}_{ik}=\widehat{p}_{il}$. For every $k>l$, we have $\widehat{z}_{ik}=0$ and $\widehat{p}_{ik}=c_{ik}\geq\widehat{p}_{il}$.
    \item \label{item:marginal-price} We have $\widehat{p}_{il}\geq c_{il}$, with equality whenever $\widehat{z}_{il}<s_{il}$.
\end{enumerate}
\end{lemma}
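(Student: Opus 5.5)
The plan is to combine \cref{observation:buyback-implication} with \cref{lemma:min-price}, using the strict ordering $c_{i1}<c_{i2}<\cdots<c_{iK_i}$ of the marginal costs. Throughout we write $\pi_i\coloneqq\min_{k'\in[K_i]}\widehat{p}_{ik'}$.

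\textbf{The case where good $i$ is unsold.} Suppose first that $\widehat{z}_{ik}=0$ for every $k\in[K_i]$. Then $\widehat{z}_{ik}<s_{ik}$ for each $k$, because $s_{ik}>0$. The contrapositive of the second part of \cref{observation:buyback-implication} gives $\widehat{p}_{ik}\leq c_{ik}$. The first part gives $\widehat{p}_{ik}\geq c_{ik}$. Hence $\widehat{p}_{ik}=c_{ik}$ for all $k$.

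\textbf{Part (ii).} Now let $l$ be the largest index with $\widehat{z}_{il}>0$. Part (ii) is the same argument applied to the single variety $(i,l)$. \cref{observation:buyback-implication} gives $\widehat{p}_{il}\geq c_{il}$. If $\widehat{z}_{il}<s_{il}$, the contrapositive of its second clause gives $\widehat{p}_{il}\leq c_{il}$, so equality holds.

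\textbf{Part (i), varieties above $l$.} By \cref{lemma:min-price}, since $\widehat{z}_{il}>0$, we have $\widehat{p}_{il}=\pi_i$. For $k>l$, maximality of $l$ gives $\widehat{z}_{ik}=0<s_{ik}$. As in the unsold case, this yields $\widehat{p}_{ik}=c_{ik}$. Since $\widehat{p}_{ik}$ is one of the prices over which $\pi_i$ is the minimum, $c_{ik}=\widehat{p}_{ik}\geq\pi_i=\widehat{p}_{il}$.

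\textbf{Part (i), varieties below $l$.} This is the main step. Take $k<l$; we show $\widehat{p}_{ik}=\widehat{p}_{il}$, and then $\widehat{z}_{ik}=s_{ik}$.

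Suppose for contradiction that $\widehat{p}_{ik}\neq\widehat{p}_{il}$. Since $\widehat{p}_{il}=\pi_i$ is the minimum, this means $\widehat{p}_{ik}>\pi_i$. Then $\widehat{z}_{ik}=0$, because by \cref{lemma:min-price} any variety with positive regular allocation is priced at $\pi_i$. Since $s_{ik}>0$, we get $\widehat{z}_{ik}<s_{ik}$, and \cref{observation:buyback-implication} forces $\widehat{p}_{ik}=c_{ik}$. Combining these with $k<l$ and part (ii) gives
\[
\pi_i<\widehat{p}_{ik}=c_{ik}<c_{il}\leq\widehat{p}_{il}=\pi_i,
\]
which is a contradiction. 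Hence $\widehat{p}_{ik}=\widehat{p}_{il}$.

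For the allocation, note $\widehat{p}_{ik}=\widehat{p}_{il}\geq c_{il}>c_{ik}$. The second clause of \cref{observation:buyback-implication} then gives $\widehat{z}_{ik}=s_{ik}$.

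The only delicate point is this contradiction argument for $k<l$. Its key observation is that a variety priced strictly above the minimum price of its good receives nothing from regular bids, so \cref{observation:buyback-implication} pins its price to its marginal cost. The strict ordering of the $c_{ik}$ then rules that out.
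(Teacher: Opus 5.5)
Your proof is correct and uses the same ingredients as the paper (\cref{observation:buyback-implication}, \cref{lemma:min-price}, and strict monotonicity of the $c_{ik}$) in essentially the same way. The only difference is the step $k<l$. The paper avoids your contradiction by chaining $c_{ik}<c_{il}\leq\widehat{p}_{il}\leq\widehat{p}_{ik}$ to get $\widehat{z}_{ik}=s_{ik}>0$ first, and then applies \cref{lemma:min-price} to conclude $\widehat{p}_{ik}=\widehat{p}_{il}$; you establish the price equality first and the full allocation second.
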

\begin{proof}
If $\widehat{z}_{ik}=0$ for all $k \in [K_i]$, then $\widehat{z}_{ik}<s_{ik}$ for all $k$, so \cref{observation:buyback-implication} rules out $\widehat{p}_{ik}>c_{ik}$, implying $\widehat{p}_{ik}=c_{ik}$ for all $k \in [K_i]$.

So suppose there exists a largest index $l$ for which $\widehat{z}_{il}>0$. By \cref{observation:buyback-implication}, $\widehat{p}_{il}\geq c_{il}$. By \cref{lemma:min-price}, $\widehat{p}_{il} \leq \widehat{p}_{ik}$ for all $k \in [K_i]$, and $\widehat{p}_{ik}=\widehat{p}_{il}$ for every $k$ with $\widehat{z}_{ik}>0$.

Fix $k<l$. Strict monotonicity of the marginal costs yields $c_{ik}<c_{il}\leq\widehat{p}_{il}\leq\widehat{p}_{ik}$. Hence $\widehat{p}_{ik}>c_{ik}$, so \cref{observation:buyback-implication} implies $\widehat{z}_{ik}=s_{ik}>0$. We can therefore apply \cref{lemma:min-price} to obtain $\widehat{p}_{ik}=\widehat{p}_{il}$. Now fix $k>l$. Maximality of $l$ tells us that $\widehat{z}_{ik}=0<s_{ik}$, so \cref{observation:buyback-implication} rules out $\widehat{p}_{ik}>c_{ik}$ and leaves $\widehat{p}_{ik}=c_{ik}$. Together with $\widehat{p}_{il}\leq \widehat{p}_{ik}$, this establishes \ref{item:cutoff-structure}.

The lower bound in \ref{item:marginal-price} was established above. If $\widehat{z}_{il}<s_{il}$, then \cref{observation:buyback-implication} rules out $\widehat{p}_{il}>c_{il}$, so $\widehat{p}_{il}=c_{il}$.
\end{proof}

\subsection{Recovering a solution for the original auction}
\label{section:recovery}

Let $(\widehat{\pb}, \widehat{\xb})$ be a competitive equilibrium for the auction $\widehat{\auction}$. We map it as follows to a competitive equilibrium $(\pb, \xb)$ for the original auction $\auction$:
\begin{itemize}
    \item Set $p_i \coloneqq \min_{k \in [K_i]} \widehat{p}_{ik}$ for each $i \in [n]$.
    \item Set $x^j_i \coloneqq \sum_{k \in [K_i]} \widehat{x}^j_{ik}$ for each $i \in [n]$ and $j \in [m]$.
\end{itemize}

For notational convenience, we let $\zb \coloneqq \sum_{j \in [m]} \xb^j$ denote the aggregate allocation to the bids $\bid^1, \ldots, \bid^m$. The unsold bundle $\ssb - \zb$ is retained by the auctioneer.

It remains to prove that $(\pb, \xb)$ is indeed a competitive equilibrium for $\auction$. For this, we need to argue two properties. At prices $\pb$,
\begin{enumerate}[label=(P\arabic*)]
    \item \label{ce-bid-condition} each bid receives a utility-maximising bundle. This is equivalent to \cref{observation:bang-per-buck}, i.e., receiving a budget-feasible bundle containing only goods that maximise bang per buck and have bang per buck at least $1$ (this bundle can be empty if all goods have bang per buck at most $1$), and exhausting the bid's budget if the maximum bang per buck at $\pb$ is greater than $1$;
    \item \label{ce-auctioneer-condition} the auctioneer maximises their utility by selling $\zb$. Recall that the auctioneer's utility from selling bundle $\yb$ is $\pb \cdot \yb - \sum_{i \in [n]} C_i(y_i)$, so we need to show that $\zb$ maximises this. 
\end{enumerate}

\begin{lemma}
\label{lemma:envy-freeness}
The candidate competitive equilibrium $(\pb, \xb)$ obtained above satisfies \ref{ce-bid-condition}.
\end{lemma}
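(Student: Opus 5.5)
The plan is to transfer optimality of $\widehat{\xb}^j$ in $\widehat{\auction}$ to optimality of $\xb^j$ in $\auction$ at the collapsed prices $\pb$. The key observation is that $\widehat{\xb}^j$ has the same cost under $\widehat{\pb}$ as $\xb^j$ has under $\pb$. By \cref{lemma:min-price}, every variety $(i,k)$ with $\widehat{x}^j_{ik}>0$ has $\widehat{z}_{ik}>0$, and hence $\widehat{p}_{ik}=p_i$. Summing over varieties then gives
\[
\widehat{\pb}\cdot\widehat{\xb}^j=\sum_i p_i\sum_k\widehat{x}^j_{ik}=\pb\cdot\xb^j .
\]
Similarly, since $\widehat{r}^j_{ik}=r^j_i$, the values agree: $\widehat{\rb}^j\cdot\widehat{\xb}^j=\rb^j\cdot\xb^j$. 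So $\xb^j$ is budget-feasible in $\auction$ and gives bid $j$ the same utility that $\widehat{\xb}^j$ gives in $\widehat{\auction}$.

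Next I will show that bid $j$ cannot do better in $\auction$ than in $\widehat{\auction}$. Take any budget-feasible $\yb$ in $\auction$. For each good $i$, choose $k_i\in\argmin_k\widehat{p}_{ik}$, and lift $\yb$ to the bundle $\widehat{\yb}$ that puts $y_i$ entirely on variety $(i,k_i)$. This lift preserves both cost and value. Hence $\widehat{\yb}$ is budget-feasible in $\widehat{\auction}$ and has utility equal to that of $\yb$. Because $\widehat{\xb}^j$ is demanded at $\widehat{\pb}$, its utility is at least that of $\widehat{\yb}$. Chaining these comparisons shows $\xb^j\in D^j(\pb)$, which is \ref{ce-bid-condition}. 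Alternatively, one can phrase this through \cref{observation:bang-per-buck}: the bang-per-buck ratios $\max_k \widehat r^j_{ik}/\widehat p_{ik}=r^j_i/p_i$ coincide, so the sets of maximisers and the budget-exhaustion requirements correspond. I prefer the direct utility comparison because it avoids case analysis.

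The main obstacle is justifying the step $\widehat{p}_{ik}=p_i$ on the support of $\widehat{\xb}^j$. The justification goes through $\widehat{z}_{ik}\ge\widehat{x}^j_{ik}>0$, which holds because allocations are nonnegative, and then through \cref{lemma:min-price}. No buyback bids enter this argument, since $\widehat{\zb}$ aggregates only the regular bids. Everything else is bookkeeping.
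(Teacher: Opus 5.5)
Your proof is correct, but it takes a different route from the paper. The paper verifies the bang-per-buck form of \ref{ce-bid-condition} by hand. It splits into the case $\xb^j=\bm{0}$ and the case where some $x^j_i>0$. In the second case it uses \cref{lemma:min-price} to get $p_i=\widehat{p}_{ik}$ on the support, and a chain of identities shows that every good in $\xb^j$ maximises bang per buck with ratio at least $1$. It then proves the spending identity $\pb\cdot\xb^j=\widehat{\pb}\cdot\widehat{\xb}^j$ for budget feasibility, and finally checks budget exhaustion when the maximal bang per buck exceeds $1$. You instead argue at the level of the two optimisation problems: collapsing preserves value and cost, and lifting each good onto a cheapest variety embeds the feasible set of $\auction$ into that of $\widehat{\auction}$ with the same objective, so optimality transfers. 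Your cheapest-variety lift plays the role of the paper's choice of $k'$ with $p_{i'}=\widehat{p}_{i'k'}$. This bypasses \cref{observation:bang-per-buck} and the case analysis entirely. The paper's route, in exchange, exhibits explicitly the structure (bang-per-buck maximisers, budget exhaustion) in terms of which \ref{ce-bid-condition} is phrased. In your framework, the step you call the main obstacle is in fact unnecessary. Since $p_i\le\widehat{p}_{ik}$ for every $k$ by definition of $\pb$, you get $\pb\cdot\xb^j\le\widehat{\pb}\cdot\widehat{\xb}^j\le\beta^j$ and $(\rb^j-\pb)\cdot\xb^j\ge(\widehat{\rb}^j-\widehat{\pb})\cdot\widehat{\xb}^j$. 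Together with your lifting bound, this already gives $\xb^j\in D^j(\pb)$ without invoking \cref{lemma:min-price} at all.
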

\begin{proof}
Fix a bid $\bid^j = (\beta^j; \rb^j)$ and its regular bid $\widehat{\bid}^j = (\widehat{\beta}^j; \widehat{\rb}^j)$ in $\widehat{\auction}$, and recall that $\widehat{\beta}^j = \beta^j$ and $\widehat{r}^j_{ik} = r^j_i$ for all $i \in [n]$ and $k \in [K_i]$.

Suppose first that $x^j_i = 0$ for all goods $i \in [n]$. Then $\widehat{x}^j_{ik} = 0$ for all varieties $(i,k)$, and so the regular bid $\widehat{\bid^j}$ in $\widehat{\auction}$ has bang per buck at most $1$ at prices $\widehat{p}$. By construction of the reduction, this implies that $\bid^j$ has bang per buck at most $1$ at $\pb$, so it satisfies \ref{ce-bid-condition} for the bid $\bid^j$ to receive the empty bundle at $\pb$.

Now suppose that $x^j_i > 0$ for at least one good $i$, and let $i$ denote a fixed such good. By the construction of $\xb^j$, we have $\widehat{x}^j_{ik} > 0$ for some variety $(i,k)$. Since $\widehat{\xb}^j$ is utility-maximising for $\widehat{\bid}^j$ at $\widehat{\pb}$, the variety $(i,k)$ maximises bang per buck for $\widehat{\bid}^j$ among all varieties and has bang per buck at least $1$ (\cref{observation:bang-per-buck}).

Now fix any good $i' \in [n]$, and choose $k' \in [K_{i'}]$ such that $p_{i'} = \widehat{p}_{i'k'}$; such a $k'$ exists by the construction of $\pb$ as a coordinatewise minimum. Since $\widehat{x}^j_{ik}>0$, we have $\widehat{z}_{ik}>0$, so \cref{lemma:min-price} implies that $(i,k)$ is a price-minimising variety of good $i$. Hence, by the construction of $\pb$, $p_i = \widehat{p}_{ik}$. Combining these two price identities with the value identities $r^j_i = \widehat{r}^j_{ik}$ and $r^j_{i'} = \widehat{r}^j_{i'k'}$, and using that $(i,k)$ maximises bang per buck for $\widehat{\bid}^j$ among all varieties for the middle inequality, we obtain
\[
\frac{r^j_i}{p_i} = \frac{\widehat{r}^j_{ik}}{\widehat{p}_{ik}} \geq \frac{\widehat{r}^j_{i'k'}}{\widehat{p}_{i'k'}} = \frac{r^j_{i'}}{p_{i'}}.
\]
The good $i'$ was arbitrary, so $i$ maximises bang per buck for $\bid^j$ at $\pb$. Moreover $(i,k)$ has bang per buck at least $1$, so the same two identities imply
\[
\frac{r^j_i}{p_i} = \frac{\widehat r^j_{ik}}{\widehat p_{ik}} \geq 1.
\]
Since the good $i$ with $x^j_i>0$ was also arbitrary, $\xb^j$ only contains goods that maximise bang per buck for $\bid^j$ at $\pb$, and every such good has bang per buck at least $1$.

It remains to show that $\bid^j$ is budget-feasible and exhausts its budget whenever the maximum bang per buck at $\pb$ exceeds $1$. We first record that the two bids spend the same amount in total. The argument of the previous paragraph applies to every variety $(h,k)$ with $\widehat{x}^j_{hk}>0$ and shows that $p_h = \widehat{p}_{hk}$ for each such variety. Hence, for every good $h \in [n]$,
\[
p_h x^j_h
= p_h \sum_{k \in [K_h]} \widehat{x}^j_{hk}
= \sum_{k \in [K_h]} \widehat{p}_{hk} \widehat{x}^j_{hk},
\]
since the terms with $\widehat{x}^j_{hk}=0$ contribute nothing to either side. Summing over $h$ yields $\pb \cdot \xb^j = \widehat{\pb} \cdot \widehat{\xb}^j$. Since $\widehat{\xb}^j$ is budget feasible and $\widehat\beta^j=\beta^j$, it follows that $\pb \cdot \xb^j\leq\beta^j$, so $\xb^j$ is budget feasible.

Now suppose some good $h$ has bang per buck greater than $1$ for $\bid^j$ at $\pb$. Choosing $k$ with $p_h = \widehat{p}_{hk}$, we have
\[
\frac{\widehat{r}^j_{hk}}{\widehat{p}_{hk}} = \frac{r^j_h}{p_h} > 1,
\]
so the maximum bang per buck of $\widehat{\bid}^j$ in $\widehat{\auction}$ also exceeds $1$. Since $\widehat{\xb}^j$ is utility-maximising for $\widehat{\bid}^j$, this forces $\widehat{\bid}^j$ to exhaust its budget, so $\widehat{\pb} \cdot \widehat{\xb}^j = \widehat{\beta}^j$. Combining with the previous paragraph and $\widehat{\beta}^j = \beta^j$, we get $\pb \cdot \xb^j = \beta^j$, and so $\bid^j$ exhausts its budget too. Hence, $\xb^j$ is utility-maximising for~$\bid^j$.
\end{proof}

\begin{lemma}
\label{lemma:auctioneer-condition}
The candidate competitive equilibrium $(\pb, \xb)$ obtained above satisfies \ref{ce-auctioneer-condition}.
\end{lemma}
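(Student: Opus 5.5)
Since the auctioneer's objective $\pb\cdot\yb - C(\yb) = \sum_{i\in[n]} (p_i y_i - C_i(y_i))$ is separable, it suffices to fix a good $i$ and show that $z_i = \sum_{k\in[K_i]} \widehat{z}_{ik}$ maximises the univariate concave function $f_i(y) \coloneqq p_i y - C_i(y)$ over $y \geq 0$ (with $f_i = -\infty$ beyond $s_i$). Because $C_i$ is convex piecewise-linear, $f_i$ is concave and piecewise-linear, with slope $p_i - c_{ik}$ on the $k$-th step $(S_{i,k-1}, S_{ik})$. A point $y$ therefore maximises $f_i$ on $[0,s_i]$ if and only if every step lying entirely to the left of $y$ has slope $\geq 0$, i.e.\ $c_{ik} \leq p_i$, every step lying entirely to the right has slope $\leq 0$, i.e.\ $c_{ik} \geq p_i$, and, if $y$ lies strictly inside a step, that step has slope exactly $0$. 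I will verify this first-order condition using \cref{lemma:monotonic-z}.

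First consider the case where $\widehat{z}_{ik}=0$ for all $k$. Then $z_i = 0$, and \cref{lemma:monotonic-z} gives $\widehat{p}_{ik} = c_{ik}$ for all $k$, so $p_i = \min_k c_{ik} = c_{i1}$. Every step then has slope $p_i - c_{ik} \leq 0$, so $y=0$ is optimal.

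Otherwise, let $l$ be the largest index with $\widehat{z}_{il}>0$. By \cref{lemma:monotonic-z}\ref{item:cutoff-structure}, $\widehat{z}_{ik}=s_{ik}$ for $k<l$ and $\widehat{z}_{ik}=0$ for $k>l$, so $z_i = S_{i,l-1} + \widehat{z}_{il} \in (S_{i,l-1}, S_{il}]$. By the same part, the prices of varieties $k<l$ equal $\widehat{p}_{il}$ and those for $k>l$ are $c_{ik} \geq \widehat{p}_{il}$, so $p_i = \widehat{p}_{il}$; this is also consistent with \cref{lemma:min-price}. Now check the slopes. For $k<l$, strict monotonicity of marginal costs and \ref{item:marginal-price} give $c_{ik} < c_{il} \leq p_i$, so these slopes are positive. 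For $k>l$, $c_{ik} > c_{il}$; I need $c_{ik} \geq p_i$, which is exactly the inequality $c_{ik} \geq \widehat{p}_{il}$ from \ref{item:cutoff-structure}. For step $l$ itself: if $\widehat{z}_{il} = s_{il}$, then $z_i = S_{il}$ is a breakpoint, and the slope on step $l$ is $p_i - c_{il} \geq 0$, which is what is needed since step $l$ lies to its left. If instead $\widehat{z}_{il} < s_{il}$, then $z_i$ is interior to step $l$, and \ref{item:marginal-price} gives $p_i = c_{il}$, so the slope there is zero. In either case the first-order condition holds, so $z_i$ maximises $f_i$.

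Finally, I also note that $z_i \leq s_i$, which holds by supply feasibility in $\widehat{\auction}$, so the cost is finite. Summing over $i$ shows that $\zb$ maximises the auctioneer's profit. The main care point is the correct identification $p_i = \widehat{p}_{il}$ together with the boundary case $\widehat{z}_{il} = s_{il}$, where $z_i$ sits on a breakpoint and one must use one-sided slopes rather than demand slope zero. Everything else follows directly from \cref{lemma:monotonic-z}.
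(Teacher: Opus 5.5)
Your proposal is correct and follows essentially the same route as the paper. Both use separability, the slope-sign optimality criterion for the concave per-good payoff, and a case analysis driven by \cref{lemma:monotonic-z}; the paper merely splits your ``last step fully sold'' case into the subcases $l<K_i$ and $l=K_i$.
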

\begin{proof}
Since both the payoff and the costs are separable across goods, it suffices to check each good separately. The auctioneer's payoff from good $i$ is $g_i(y_i) \coloneqq p_i y_i-C_i(y_i)$ for $y_i \in [0,s_i]$, which is continuous and piecewise linear with slope $p_i-c_{ik}$ on the interval $(S_{i,k-1},S_{ik})$ of step $k$. As the marginal costs $c_{ik}$ are strictly increasing in $k$, the slopes of $g_i$ are strictly decreasing in $k$, so $g_i$ is concave. Hence $z_i \in [0,s_i]$ maximises $g_i$ if the slope of every step lying below $z_i$ is non-negative, the slope of every step lying above $z_i$ is non-positive, and, in case $z_i$ lies in the interior of a step, the slope of that step is zero. Here we say that step $k$ lies below $z_i$ if $S_{ik} \leq z_i$, and above $z_i$ if $S_{i,k-1} \geq z_i$.

So fix good $i$. If $\widehat{z}_{ik} = 0$ for all $k \in [K_i]$, then $z_i = 0$ and, by \cref{lemma:monotonic-z}, $\widehat{p}_{ik} = c_{ik}$ for all $k$. By the construction of $\pb$ and the strict monotonicity of the marginal costs, $p_i = \min_{k \in [K_i]} c_{ik} = c_{i1}$. No step lies below $z_i = 0$, and every step $k$ satisfies $c_{ik} \geq c_{i1} = p_i$, so every slope is non-positive and selling any positive quantity cannot increase the auctioneer's payoff.

Now suppose that $\widehat{z}_{ik} > 0$ for some $k$, and let $l$ be the largest index with $\widehat{z}_{il} > 0$. By \cref{lemma:min-price}, the variety $(i,l)$ attains the minimum price among the varieties of good $i$, so $p_i = \widehat{p}_{il}$ by the construction of $\pb$. Furthermore,
\[
z_i = \sum_{k \in [K_i]} \widehat{z}_{ik} = \sum_{k=1}^l \widehat{z}_{ik},
\]
where the second equality uses $\widehat z_{ik}=0$ for $k>l$, by \cref{lemma:monotonic-z}\ref{item:cutoff-structure}. The same item also ensures that $\widehat z_{ik}=s_{ik}$ for $k<l$, while $0<\widehat z_{il}\leq s_{il}$, so
\[
\sum_{k=1}^{l-1} s_{ik} < z_i \leq \sum_{k=1}^{l} s_{ik}.
\]

We now distinguish between three cases.
\begin{description}
    \item[Case 1:] $\widehat{z}_{il} = s_{il}$ and $l < K_i$, so $z_i = \sum_{k=1}^{l} s_{ik}$. The lower bound $p_i=\widehat{p}_{il}\geq c_{il}$ follows from \cref{lemma:monotonic-z}\ref{item:marginal-price}, while \cref{lemma:monotonic-z}\ref{item:cutoff-structure} applied to $k=l+1$ implies $p_i=\widehat{p}_{il}\leq c_{i,l+1}$.
    \item[Case 2:] $\widehat{z}_{il} = s_{il}$ and $l = K_i$, so $z_i = s_i$. \Cref{lemma:monotonic-z}\ref{item:marginal-price} implies that $p_i = \widehat{p}_{il} \geq c_{il}$.
    \item[Case 3:] $\widehat{z}_{il} < s_{il}$, so $z_i < \sum_{k=1}^{l} s_{ik}$. \Cref{lemma:monotonic-z}\ref{item:marginal-price} implies that $p_i = \widehat{p}_{il} = c_{il}$.
\end{description}

We check the optimality criterion in each case. In Case~1, the steps lying below $z_i$ are those with $k \leq l$, and they satisfy $p_i-c_{ik}\geq c_{il}-c_{ik}\geq 0$ since $c_{ik} \leq c_{il}$; the steps lying above $z_i$ are those with $k>l$, and they satisfy $p_i-c_{ik}\leq c_{i,l+1}-c_{ik}\leq 0$ since $c_{ik} \geq c_{i,l+1}$. In Case~2, the steps lying below $z_i$ are those with $k \leq l = K_i$, and they satisfy $p_i \geq c_{il} \geq c_{ik}$; no step lies above $z_i = s_i$, and no larger quantity is feasible. In Case~3, the quantity $z_i$ lies in the interior of step $l$. The steps lying below $z_i$ are those with $k<l$, and they satisfy $p_i-c_{ik}=c_{il}-c_{ik}>0$; the steps lying above $z_i$ are those with $k>l$, and they satisfy $p_i - c_{ik} = c_{il}-c_{ik}<0$. The slope on step $l$ itself is $p_i-c_{il}=0$, so the payoff is constant across step $l$ and every quantity in it, including $z_i$, is optimal. In each case $z_i$ therefore maximises $g_i$, so neither reducing nor increasing the quantity of good $i$ sold at price $p_i$ increases the auctioneer's payoff.
\end{proof}

Recall that we consider instances of the Arctic auction in which the inputs---bids, supply, and supply costs curves---are rational values specified in binary.

\begin{theorem}
\label{theorem:reduction-costs-costless}
Arctic auctions with inputs encoded as rationals in binary and stepwise marginal costs reduce in strongly polynomial time to costless Arctic auctions.
\end{theorem}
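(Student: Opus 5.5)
The reduction consists of three parts: build $\widehat{\auction}$ from $\auction$, solve it with the costless oracle, and map the returned equilibrium back. The theorem therefore comes down to three claims. First, the forward map $\auction \mapsto \widehat{\auction}$ runs in strongly polynomial time and produces a valid instance. Second, the backward map $(\widehat{\pb},\widehat{\xb}) \mapsto (\pb,\xb)$ of \cref{section:recovery} runs in strongly polynomial time. Third, the backward map sends every competitive equilibrium of $\widehat{\auction}$ to a competitive equilibrium of $\auction$. Correctness is essentially already done: \cref{observation:well-formed} guarantees that $\widehat{\auction}$ is a valid costless instance with an equilibrium. \Cref{lemma:envy-freeness} gives \ref{ce-bid-condition} for the recovered pair, and \cref{lemma:auctioneer-condition} gives \ref{ce-auctioneer-condition}. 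Together these are exactly the definition of a competitive equilibrium of $\auction$. The only residual check is feasibility, $z_i \le s_i$, which follows because each $\widehat z_{ik}\le s_{ik}$ and the $s_{ik}$ sum to $s_i$.

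For the complexity bound, let $K=\sum_i K_i$. The instance $\widehat{\auction}$ has $K$ varieties. It has $m$ regular bids, each with a root vector of length $K$ copied from $\rb^j$. It has at most $K$ buyback bids, each with a single nonzero entry $c_{ik}$ and budget $c_{ik}s_{ik}$. The construction therefore uses $O(mK + K)$ copy operations and $K$ multiplications, which is polynomial in $n$, $m$ and $K$ and independent of bit lengths. To make the bound genuinely strongly polynomial I would also check that all numbers stay polynomially bounded in bit size. Every entry is either an input rational or a product $c_{ik}s_{ik}$ of two input rationals, so its encoding length is at most the sum of two input encoding lengths. The supplies $s_{ik}$ are given directly as step widths; if costs are instead given by breakpoints $S_{ik}$, the widths are differences of inputs, which is again fine.

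The backward map computes, for each $i$, a minimum over $K_i$ prices and, for each $j$ and $i$, a sum of $K_i$ allocations. This costs $O(K + mK)$ comparisons and additions. The outputs are either entries of the oracle's answer or sums of at most $K_i$ of them, so their bit sizes are polynomial in the oracle's output size.

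I expect no real mathematical obstacle, since the substantive work is in the preceding lemmas. The step needing most care is stating precisely what "reduces in strongly polynomial time" means. I would phrase it as a many-to-one reduction: a single oracle call on $\widehat{\auction}$, with strongly polynomial pre- and post-processing whose arithmetic operates on numbers of polynomially bounded bit length. Combined with the existence guarantee from \cref{observation:well-formed}, this ensures the oracle always returns an equilibrium that the backward map can convert.
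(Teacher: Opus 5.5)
Your proposal is correct and follows essentially the same route as the paper. The paper also counts $K$ varieties and at most $m+K$ bids, notes one multiplication $c_{ik}s_{ik}$ per step and $O(K+mK)$ minima and sums in the recovery map with polynomially bounded encoding lengths, and gets correctness from \cref{observation:well-formed} together with \cref{lemma:envy-freeness,lemma:auctioneer-condition}. Your extras (the explicit check $z_i\le s_i$ and the remark on breakpoint-encoded costs) are harmless refinements; just note that under dense encoding the buyback root vectors contribute $O(K^2)$ entries rather than $O(K)$, which is still polynomial.
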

\begin{proof}
Write $K=\sum_{i \in [n]} K_i$ for the total number of marginal-cost steps. The construction of $\widehat{\auction}$ creates one variety for each marginal-cost step, at most one buyback bid for each step, and one regular bid for each original bid. Thus an auction with $n$ goods and $m$ bids becomes a costless auction with $K$ varieties and at most $m+K$ bids. Its value vectors contain $O(mK+K^2)$ entries under dense encoding (in which every entry of the vectors is specified) and $O(mK+K)$ under sparse encoding (in which only non-zero entries, together with their indices, are specified), so its explicit description has polynomial size. By \cref{observation:well-formed}, $\widehat{\auction}$ is a valid costless Arctic auction and has a competitive equilibrium, so the reduction always produces a solvable instance.

The forward map writes this polynomial-size instance and performs at most one multiplication $c_{ik}s_{ik}$ per step. Given a competitive equilibrium $(\widehat{\pb},\widehat{\xb})$ of $\widehat{\auction}$, the prices and allocations for $\auction$ are recovered by
\[
p_i \coloneqq \min_{k \in [K_i]} \widehat p_{ik},
\qquad
x_i^j \coloneqq \sum_{k \in [K_i]} \widehat x^j_{ik},
\]
which uses $O(K)$ comparisons and $O(mK)$ additions. The number of arithmetic operations in both maps is thus bounded by a polynomial in $n$, $m$ and $K$ alone, independently of the bit lengths of the input numbers. The forward map produces only polynomial-bit numbers, while the reverse map only takes minima and sums of at most $K_i$ coordinates, so its intermediate and output numbers have encoding length polynomial in the size of $(\widehat{\pb}, \widehat{\xb})$. Hence both maps are strongly polynomial. \Cref{lemma:envy-freeness,lemma:auctioneer-condition} prove the bidder and auctioneer optimality conditions for the recovered outcome, so it is a competitive equilibrium of $\auction$.
\end{proof}

\begin{corollary}
\label{corollary:existence-with-costs}
Every Arctic auction with stepwise marginal costs satisfying \cref{assumption:budgets-values} has a competitive equilibrium.
\end{corollary}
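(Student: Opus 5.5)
The plan is to derive the corollary directly from the cost-removal reduction, with no new analysis. Let $\auction$ be an Arctic auction with stepwise marginal costs satisfying \cref{assumption:budgets-values}. I will first build the costless auction $\widehat{\auction}$ of \cref{section:cost-reduction}: one variety $(i,k)$ with supply $s_{ik}$ for each marginal-cost step, one regular bid $\widehat{\bid}^j$ for each original bid, and one buyback bid $\widehat{\bid}^{ik}$ for each step with $c_{ik}>0$.

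By \cref{observation:well-formed}, $\widehat{\auction}$ satisfies \cref{assumption:budgets-values}. The existence result of \citet{FinsterGoldbergLock2025} for costless Arctic auctions then gives a competitive equilibrium $(\widehat{\pb},\widehat{\xb})$ of $\widehat{\auction}$. This is the only external input, and it is exactly where the standing assumption on $\auction$ is used.

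Next I apply the recovery map of \cref{section:recovery}, setting $p_i=\min_k \widehat p_{ik}$ and $x^j_i=\sum_k \widehat x^j_{ik}$. \Cref{lemma:envy-freeness} shows that every bid receives a demanded bundle at $\pb$, giving \ref{ce-bid-condition}. \Cref{lemma:auctioneer-condition} shows that the aggregate $\zb$ maximises $\pb\cdot\yb-C(\yb)$, giving \ref{ce-auctioneer-condition}. These are precisely the two defining conditions of a competitive equilibrium of $\auction$, so $(\pb,\xb)$ is one.

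I should also check that $\pb$ is a legitimate price vector. Strict positivity holds because every $\widehat p_{ik}>0$. Feasibility holds because $z_i\le\sum_k s_{ik}=s_i$, and this is in any case implied by auctioneer optimality, since $C_i=\infty$ beyond $s_i$.

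There is essentially no obstacle here; all the substantive work is already done in the cited lemmas. The one point to confirm is that those lemmas apply to an \emph{arbitrary} equilibrium of $\widehat{\auction}$, not to one produced by a particular algorithm. They do, because each is stated for any competitive equilibrium $(\widehat{\pb},\widehat{\xb})$.
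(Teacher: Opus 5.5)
Your proposal is correct and matches the paper's proof: construct $\widehat{\auction}$, obtain an equilibrium of it from \cref{observation:well-formed} and \citet{FinsterGoldbergLock2025}, and apply \cref{lemma:envy-freeness,lemma:auctioneer-condition} to the recovered outcome. Your extra checks on the positivity of $\pb$ and the feasibility of $\zb$ are harmless details that the paper leaves implicit.
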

\begin{proof}
By \cref{observation:well-formed}, the transformed auction $\widehat{\auction}$ is a valid costless Arctic auction, so it has a competitive equilibrium \citep{FinsterGoldbergLock2025}. By \cref{lemma:envy-freeness,lemma:auctioneer-condition}, the recovered outcome is a competitive equilibrium of $\auction$.
\end{proof}

\section{Reducing from Arctic to Linear Exchange Markets}
\label{section:exchange-reduction}

In the Arctic auction without costs, bidders have fixed budgets and the auctioneer wishes to sell the entirety of a fixed supply. A linear Arrow--Debreu exchange market lets budgets and supply arise from endowments instead. This permits a direct reduction: the seller is endowed with the real goods, the bidders are endowed with money, and money is an ordinary good.

A \textit{linear exchange market} on goods $[n]$ consists of agents $a$ with initial endowments $\bm{\omega}^a \in \R^{[n]}$ and a linear utility function $u^a$ mapping each possible bundle $\xb$ of goods to a value $u^a(\xb) = \rb^a \cdot \xb$ for some linear coefficients $\rb^a \in \R^{[n]}$. At prices $\pb$, agent $a$ has income $\pb\cdot\bm{\omega}^a$ and chooses a utility-maximising bundle subject to this budget. A competitive equilibrium consists of prices and demanded allocations such that every good clears. Exchange equilibrium prices are homogeneous: multiplying all prices by the same positive constant does not change demands.

\subsection{The Reduction}
Let $\auction$ be a costless Arctic auction with binary-rational data, goods $[n]$, supplies $\ssb \in \R^n_{>0}$, and bids $\bid^k = (\beta^k;\rb^k)$ for $k \in [m]$. We assume without loss of generality that every budget $\beta^k$ is positive; zero-budget bids demand nothing and can be discarded. Note that discarding them preserves \cref{assumption:budgets-values}, since the bid whose existence it requires for each good already has positive budget.

We construct a linear exchange market $\mathcal{E}$ on goods $[n]_0$, where good $0$ is money. For each bid $\bid^k$, we create an agent endowed with $\beta^k$ units of money and no real goods, with linear utility
\[
u^k(\xb)=x_0+\sum_{i \in [n]} r^k_i x_i .
\]
We also create one seller agent endowed with $s_i$ units of each real good $i$ and no money, whose utility is $u^S(\xb)=x_0$.

\subsection{Correctness}
We first show that the constructed exchange market has a competitive equilibrium, and then that any such equilibrium yields a competitive equilibrium of the original auction.

\begin{lemma}
\label{lemma:exchange-existence}
The exchange market $\mathcal{E}$ has a competitive equilibrium.
\end{lemma}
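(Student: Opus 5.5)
The plan is to go through the costless Arctic auction itself rather than a general existence theorem for exchange markets. We take a competitive equilibrium $(\pb,\xb)$ of $\auction$, which exists by \citet{FinsterGoldbergLock2025} under \cref{assumption:budgets-values}. We then extend it to $\mathcal{E}$ by setting the price of money to $p_0\coloneqq 1$ and keeping prices $p_i$ for the real goods. In $\mathcal{E}$, each buyer $k$ receives $\xb^k$ on the real goods together with $x^k_0\coloneqq \beta^k-\pb\cdot\xb^k\geq 0$ units of money. The seller receives no real goods and $x^S_0\coloneqq \pb\cdot\ssb$ units of money. (Alternatively one could invoke general existence results for linear exchange markets, e.g.\ Gale's condition, but the direct construction also sets up the converse direction proved later.)

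First, the seller's bundle is optimal. Its income is $\pb\cdot\ssb$ and it values only money, which has price $1$, so spending all income on money is optimal. Second, consider buyer $k$, whose income is $\beta^k$. Its utility is linear with bang per buck $1$ for money and $r^k_i/p_i$ for good $i$. In the exchange market an optimal bundle exhausts income, and it contains only goods of maximum bang per buck, counting money among these goods. By \cref{observation:bang-per-buck} and the characterisation \ref{ce-bid-condition}, the Arctic demand $\xb^k$ contains only goods with maximal ratio $r^k_i/p_i$, all with ratio at least $1$. It also exhausts the budget whenever that maximum exceeds $1$. If the maximum exceeds $1$, then $x^k_0=0$ and the bundle is optimal. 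If the maximum is at most $1$, money is a best good, and any real goods bought have ratio exactly $1$, so they are tied with money. Spending the remainder on money then exhausts income optimally. In both cases the extended bundle is utility-maximising in $\mathcal{E}$.

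Third, check market clearing. For real goods, in the costless auction the auctioneer sells the full supply at strictly positive prices, so $\sum_k \xb^k=\ssb$. For money, the demand is $\sum_k(\beta^k-\pb\cdot\xb^k)+\pb\cdot\ssb=\sum_k\beta^k-\pb\cdot\ssb+\pb\cdot\ssb=\sum_k\beta^k$, which equals the money endowment. Hence all markets clear.

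The main obstacle is the full-supply claim. We need that every costless-auction equilibrium sells all of $\ssb$, and this follows from strict positivity of equilibrium prices, as noted after \cref{assumption:budgets-values}. We also need the case analysis when the maximum bang per buck equals $1$ exactly; there the tie with money must be handled carefully so that the extended bundle is still optimal.
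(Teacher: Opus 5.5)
Your proposal is correct and is essentially the paper's own proof: extend an Arctic equilibrium $(\pb,\xb)$ by $p_0=1$, give each buyer $\xb^k$ plus its unspent budget in money and the seller $\pb\cdot\ssb$ units of money, then verify optimality via the bang-per-buck characterisation (with money at bang per buck $1$) and clearing via full sale at strictly positive prices. Your case split on whether the maximum real-good bang per buck exceeds $1$ is the same as the paper's split on whether money is among the bang-per-buck maximisers.
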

\begin{proof}
Let $(\pb, \xb)$ be a competitive equilibrium of $\auction$; one exists, and its prices are strictly positive \citep{FinsterGoldbergLock2025}. Extend the prices by $p_0 \coloneqq 1$. Allocate to the agent of bid $\bid^k$ the bundle $\xb^k$ together with its unspent budget $\beta^k - \pb \cdot \xb^k$ in money, and allocate to the seller $\sum_{i \in [n]} p_i s_i$ units of money and no real goods.

Each agent receives a utility-maximising bundle. By \cref{observation:bang-per-buck}, $\xb^k$ is supported on goods that maximise bang per buck among $[n]_0$, where money has bang per buck $1$. If money is not among the maximisers, then $\bid^k$ exhausts its budget, so the unspent remainder $\beta^k - \pb \cdot \xb^k$ is zero and the agent receives $\xb^k$ alone. If money is a maximiser, the remainder may be positive, but it is allocated in money, which maximises bang per buck by assumption. In both cases, the agent's bundle is supported on bang per buck maximising goods and spends its full budget $\beta^k$, so it is utility-maximising in $\mathcal{E}$. The seller values only money and all prices are positive, so spending its entire revenue $\sum_{i \in [n]} p_i s_i$ on money is optimal.

It remains to check market clearing. As the prices are strictly positive and $\auction$ has no costs, auctioneer optimality implies $\sum_{k \in [m]} \xb^k = \ssb$, so every real good clears. The agents hold $\sum_{k \in [m]} (\beta^k - \pb \cdot \xb^k) = \sum_{k \in [m]} \beta^k - \pb \cdot \ssb$ units of money and the seller holds $\pb \cdot \ssb$, so the money good clears as well.
\end{proof}

\begin{theorem}
\label{theorem:exchange-reduction}
Costless Arctic auctions with binary-rational data reduce in strongly polynomial time to linear exchange markets by a many-one reduction.
\end{theorem}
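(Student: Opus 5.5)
The reduction is the construction of $\mathcal{E}$ given above, with the forward map being that construction. Given any competitive equilibrium $(\pb,\xb)$ of $\mathcal{E}$, the reverse map first rescales all prices by $1/p_0$ so that money has price $1$; this is allowed by homogeneity. It then returns the real-good prices $(p_1,\dots,p_n)$ and, for each bid $\bid^k$, the real-good part of agent $k$'s bundle. \Cref{lemma:exchange-existence} guarantees that $\mathcal{E}$ is solvable. The complexity claim is routine: $\mathcal{E}$ has $n+1$ goods and $m+1$ agents, and its entries are copied directly from the input. The reverse map needs $O(n+m n)$ divisions, so it is strongly polynomial with polynomial bit lengths. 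The substance is correctness of the reverse map.

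\textbf{Step 1: $p_0>0$.} I would first show that money has positive price in every equilibrium of $\mathcal{E}$. Suppose $p_0=0$. Then every bidder agent has income $0$. The seller's income is $\pb\cdot\ssb$, and the seller values only money, so the seller demands money and not real goods. Real goods therefore cannot clear unless their prices are $0$, in which case they too generate no income. If some real good had price $0$, a bidder valuing it positively (one exists by \cref{assumption:budgets-values}) would demand unboundedly much, which is impossible. If instead all prices were $0$, money at price $0$ would also be demanded unboundedly. Either way we get a contradiction, so $p_0>0$.

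I would then normalise $p_0=1$. The same unbounded-demand argument gives $p_i>0$ for every real good $i$. I expect this positivity-of-prices step to be the main obstacle, mainly because it requires care about how exchange-market equilibria are defined when some prices are zero.

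\textbf{Step 2: bidder optimality.} With $p_0=1$, agent $k$ has income exactly $\beta^k$ and maximises $x_0+\rb^k\cdot\xb$ subject to $\pb\cdot\xb+x_0\le\beta^k$. By \cref{observation:bang-per-buck}-style reasoning, agent $k$ spends only on goods maximising bang per buck among $[n]_0$, where money has bang per buck $1$, and spends the whole budget because utility is increasing. Dropping the money coordinate, the real-good part $\xb^k$ therefore has three properties. It is supported on real goods maximising $r^k_i/p_i$ among real goods, and these have ratio at least $1$. Its cost is at most $\beta^k$. Its cost equals $\beta^k$ whenever some real good has ratio strictly greater than $1$, since then money is not a maximiser. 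This is precisely condition \ref{ce-bid-condition}.

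\textbf{Step 3: auctioneer optimality.} The seller holds no real goods in equilibrium, since it values only money and real prices are positive. Clearing of the real goods then gives $\sum_k\xb^k=\ssb$. In the costless auction with strictly positive prices, selling the full supply maximises $\pb\cdot\zb$ over $0\le\zb\le\ssb$. Hence the recovered $(\pb,\xb)$ is a competitive equilibrium of $\auction$, which completes the many-one reduction.
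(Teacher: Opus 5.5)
Your proposal is correct and follows essentially the same route as the paper's proof. Both use the same forward construction and normalise to $p_0=1$ by homogeneity. Both get positive prices from unbounded demand together with \cref{assumption:budgets-values}, and bidder optimality from bang per buck with money at ratio $1$. Both obtain full sale because the seller demands only money. Your Step~1 is more roundabout than it needs to be. At $p_0=0$ every agent, the seller included, values money and so demands it in unbounded quantity, which is the paper's one-line argument.
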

\begin{proof}
The construction creates $m+1$ agents on $n+1$ goods by copying input numbers and inserting zeros and ones, so it is strongly polynomial, and \cref{lemma:exchange-existence} guarantees that $\mathcal{E}$ has a competitive equilibrium. It remains to show that any competitive equilibrium of $\mathcal{E}$ yields a competitive equilibrium of $\auction$.

So consider a competitive equilibrium of $\mathcal{E}$. First, $p_0>0$, since every agent values money and would demand it in unbounded quantity at price zero. Each bidder therefore has positive income $\beta^kp_0$. For every real good $i$, \cref{assumption:budgets-values} ensures that there is a positive-budget bidder with $r^k_i>0$, who would demand good $i$ in unbounded quantity if $p_i=0$. Thus all prices are positive. Since exchange equilibrium prices are homogeneous, normalise them so that $p_0=1$. Bidder $k$ then has budget $\beta^k$ and maximises
\[
x_0+\sum_{i \in [n]} r^k_i x_i
\quad\text{subject to}\quad
x_0+\sum_{i \in [n]} p_i x_i \leq \beta^k .
\]
By \cref{observation:bang-per-buck}, bidder $k$ buys only goods that maximise bang per buck among the real goods and money. Money has bang per buck $1$, so a real good is bought only when $r^k_i \geq p_i$, and any unspent budget is retained as money. Deleting the money coordinate therefore yields exactly the Arctic demand correspondence $D^k(\pb)$, by \cref{observation:money-good-demand}. The seller spends all revenue on money and demands no real good, so market clearing allocates every real good to the bidders. Full sale is auctioneer-optimal because $\auction$ has no costs and prices are positive. The restricted prices and bidder allocations are therefore a competitive equilibrium of $\auction$.

Given polynomial-bit rational equilibrium data, if allocations are represented by money flows $f_{ki} = p_i x^k_i$, first recover quantities as $x^k_i=f_{ki}/p_i$; this is invariant under common rescaling. Recovery then normalises $n+1$ prices by $p_0$ and deletes the money coordinate and seller allocation. These operations use polynomially many divisions and preserve polynomial encoding length, so the reverse map is also strongly polynomial.
\end{proof}

\section{Reducing from Arctic to Fisher markets}
\label{section:fisher-reduction}

The ``Fisher-market'' reduction turns an Arctic auction without costs into an ordinary Fisher market by introducing a cash good. This good plays the role of the Arctic bidder's outside option: spending money on a real good is attractive only when that good offers at least as much value per dollar as keeping the money. When the value and price of the cash good are both $1$, Fisher demand over the real goods agrees exactly with Arctic demand.

We first solve the Arctic auction as a Fisher market with the original goods, supply, and bids, interpreted as Fisher bidders, without the cash good. If every bidder obtains bang per buck at least $1$, every bidder weakly prefers the real goods she receives to keeping her money (\cref{observation:bang-per-buck}). Hence the Fisher equilibrium is already an Arctic equilibrium, so we are done. Otherwise, we add a cash good to the Fisher market and vary its supply until its equilibrium price is $1$. We show that such a supply exists and can be found via binary search, followed if necessary by rational reconstruction, in polynomial time.

Each binary search step fixes a cash supply and computes a Fisher equilibrium. If the cash price is above $1$, the search moves to larger supplies; if it is below $1$, it moves to smaller supplies. If a supply with cash price $1$ has been found, the Fisher equilibrium yields an Arctic equilibrium by removing cash entries. If the search does not hit that price exactly, we can recover such a rational supply from a sufficiently short search interval.

We first define the markets and state the algorithm, then prove the demand correspondence, the existence of a suitable supply, and the bounds needed for exact recovery.

\subsection{The two Fisher markets}
Let $\auction$ be an Arctic auction without costs conducted on goods $[n]$, with supply $\ssb \in \R^n_{>0}$ and with a collection of bids $\bids = \{\bid^1, \ldots, \bid^m \}$. Recall that each Arctic bid $\bid^k = (\beta^k; \rb^k)$ consists of a budget $\beta^k$ and a root $\rb^k$. We assume that all supplies, budgets and values are binary-encoded rationals. A zero-budget bid demands only the zero bundle and can be discarded, so we also assume $\beta^k>0$ for every $k\in[m]$.

Let $\market$ be the Fisher market on the same goods $[n]$ with supply $\ssb$ and bids $\bids$, interpreted as Fisher bidders with the same value vector and budget. Fix an equilibrium $(\pb^*,\xb^*)$ of $\market$, and write $\xb^{*k}$ for bidder $k$'s allocation. Define
\[
{\alpha}_k\coloneqq\max_{i\in[n]}\frac{r_i^k}{p^*_i},
\qquad
\alpha_{\min}\coloneqq\min_{k\in[m]}\alpha_k.
\]
Thus $\alpha_k$ is bidder $k$'s maximum bang per buck in $\market$, and $\alpha_{\min}$ is the smallest of these maxima. Every $\alpha_k$ is positive by \cref{assumption:budgets-values}. Linear Fisher equilibrium prices are unique \citep{JainVazirani2010}, so these quantities do not depend on the choice of equilibrium. 

Secondly, let $\widetilde{\market}$ be the market defined on goods $[n]_0\coloneqq\{0,1,\ldots,n\}$, where good $0$ is an additional ``cash'' good. The Fisher bidders $\widetilde{\bids}$ are obtained by extending each Arctic bid's value vector with a value of $1$ for cash; the budget is unchanged. Formally, $\widetilde{\bids} = \{ \widetilde{\bid}^1, \ldots, \widetilde{\bid}^m \}$, where each bid $\widetilde{\bid}^k = (\beta^k; \widetilde{\rb}^k)$ has budget $\beta^k$ and value vector $\widetilde{\rb}^k$ defined by $\widetilde{r}^k_0 = 1$ and $\widetilde{r}^k_i = r^k_i$ for all $i \in [n]$. In $\widetilde{\market}$, the supply of goods $i \in [n]$ is set to the same supply $s_i$ as in the Arctic auction, while the supply $s_0$ of the cash good is a parameter.

For $s_0>0$, let $\widetilde{\pb}(s_0)$ denote the unique equilibrium price vector of $\widetilde{\market}$ at cash supply $s_0$. Write $\widetilde p_i(s_0)$ for the price of good $i\in[n]_0$; in particular, $\widetilde p_0(s_0)$ is the cash price. Existence and uniqueness of equilibrium prices in linear Fisher markets are well known, so the function is well-defined.

We use an exact Fisher equilibrium oracle: on a market with rational values, budgets and supplies, it returns rational equilibrium prices and allocations of polynomial encoding length in binary; such equilibria exist for rational linear Fisher markets \citep{DPSV}. In particular, its output permits exact comparison of prices with $1$. The size of an oracle call includes the binary encoding length of its rational input.

To make the search exact in polynomial time, we use the bounded rational solution from \cref{lemma:rational-supply} and the reconstruction procedure in \cref{lemma:rational-reconstruction}. \Cref{algorithm:reduction-2} states the procedure formally.

\begin{algorithm}
\begin{algorithmic}[1]
\Require An Arctic auction $\auction$ without costs and with binary-rational data.
\Ensure A competitive equilibrium for $\auction$.
\State Discard zero-budget bids and construct $\market$ on goods $[n]$ and its cash extension $\widetilde{\market}$.
\State Compute a Fisher equilibrium $(\pb^*,\xb^*)$ of $\market$.
\State Let $\alpha_k\gets\max_{i\in[n]} r_i^k/p_i^*$ for every $k\in[m]$, and $\alpha_{\min}\gets\min_{k\in[m]}\alpha_k$.
\If{$\alpha_{\min}\geq1$}
    \State \Return $(\pb^*,\xb^*)$.
\EndIf
\State Let $U \gets 2\sum_{k \in [m]}\beta^k$, $a \gets 0$, and $b \gets U$. \Comment{A positive solution lies in $(0,U)$}
\State Compute the bit-length bound $L$ from \cref{lemma:rational-supply}.
\State Let $N$ be the smallest non-negative integer such that $U/2^N<2^{-2L-1}$.
\For{$t=1,\ldots,N$}
    \State $c \gets (a+b)/2$.
    \State Solve $\widetilde{\market}$ at cash supply $c$, obtaining an equilibrium with cash price $\widetilde p_0(c)$.
    \If{$\widetilde p_0(c)=1$}
        \State \Return the restriction of this Fisher equilibrium to goods $[n]$.
    \ElsIf{$\widetilde p_0(c)>1$}
        \State $a \gets c$.
    \Else
        \State $b \gets c$.
    \EndIf
\EndFor
\State Use \cref{lemma:rational-reconstruction} to find the unique rational $s_0\in[a,b]$ whose reduced numerator and denominator are at most $2^L$.
\State Solve $\widetilde{\market}$ at cash supply $s_0$. \Comment{$\widetilde p_0(s_0)=1$}
\State \Return the restriction of this Fisher equilibrium to goods $[n]$.
\end{algorithmic}
\caption{Reducing an Arctic auction to Fisher markets}
\label{algorithm:reduction-2}
\end{algorithm}

\subsection{Correctness}
The connection between the two markets follows from an alternative characterisation of Arctic demand using a ``money good'' $0$, described in Section 5 of \citet{FinsterGoldbergLock2025}.

\begin{observation}
\label{observation:money-good-demand}
For a given Arctic bid $\bid^k = (\beta^k; \rb^k)$ and prices $\pb$, set $r^k_0=p_0=1$ and let
\[
I^k_{\auction}(\pb)\coloneqq\argmax_{i\in[n]_0}\frac{r_i^k}{p_i}.
\]
Then
\[
D^k(\pb)=\conv\left\{\frac{\beta^k}{p_i}\eb^i\ \middle|\ i\in I^k_{\auction}(\pb)\right\}\subset\R^n,
\]
where the vector in the term indexed by $i=0$ is understood to be $\bm 0$. Equivalently, if the extended Fisher bidder $\widetilde{\bid}^k$ faces cash price $1$, the projection of its Fisher demand onto the real goods equals $D^k(\pb)$.
\end{observation}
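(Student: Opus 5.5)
The statement to prove is \cref{observation:money-good-demand}: Arctic demand equals the convex hull of the single-good bundles $\frac{\beta^k}{p_i}\eb^i$ over the bang-per-buck maximisers in $[n]_0$, with money contributing $\bm 0$. I will show both inclusions directly from the definition $D^k(\pb)=\argmax\{(\rb^k-\pb)\cdot\xb : \xb\geq \bm 0,\ \pb\cdot\xb\leq\beta^k\}$. Then I will read off the Fisher statement by comparing with the Fisher demand formula $\widetilde D$.

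The first step is to change variables to spending $f_i=p_ix_i\geq 0$ for $i\in[n]$, with $f_0=\beta^k-\sum_i f_i\geq 0$ the unspent budget. The objective becomes
\[
(\rb^k-\pb)\cdot\xb=\sum_{i\in[n]}\Bigl(\frac{r^k_i}{p_i}-1\Bigr)f_i=\sum_{i\in[n]_0}\frac{r^k_i}{p_i}f_i-\beta^k,
\]
using $r^k_0/p_0=1$. So the problem is to maximise a linear function over the simplex $\{f\geq 0:\sum_{i\in[n]_0}f_i=\beta^k\}$. For $\beta^k>0$, its maximisers are exactly the points of the simplex supported on $I^k_{\auction}(\pb)$, which is the convex hull of the vertices $\beta^k\eb^i$ for $i\in I^k_{\auction}(\pb)$. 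Mapping back via $x_i=f_i/p_i$ and dropping coordinate $0$ is a linear map. It sends $\beta^k\eb^i$ to $\frac{\beta^k}{p_i}\eb^i$ for $i\in[n]$ and $\beta^k\eb^0$ to $\bm 0$, and linear images of convex hulls are convex hulls of images, which gives the displayed formula. The case $\beta^k=0$ is trivial, since both sides are $\{\bm 0\}$.

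For the equivalent formulation, the extended Fisher bidder $\widetilde{\bid}^k$ at prices $(1,\pb)$ has $\widetilde I(\pb)=I^k_{\auction}(\pb)$, because its values are $r^k_0=1$ and $r^k_i$. Its demand is therefore $\conv\{\frac{\beta^k}{p_i}\eb^i : i\in I^k_{\auction}\}$ in $\R^{[n]_0}$, where the $i=0$ vertex is $\beta^k\eb^0$. Projection onto the real coordinates is linear and sends that vertex to $\bm 0$, so it yields $D^k(\pb)$.

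The main subtlety is the substitution step. It needs $\pb>\bm 0$, which is a standing assumption, so the map between $\xb$ and $f$ is a bijection that preserves feasibility. It also needs the slack variable $f_0$ to turn the inequality budget into an equality, so that the maximisers over the feasible region are exactly the optimal face of the simplex.
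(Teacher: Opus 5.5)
Your proof is correct. The paper gives no proof of its own for this observation: it attributes the money-good characterisation to Section~5 of \citet{FinsterGoldbergLock2025} and states it. Your argument is a self-contained derivation of that cited fact.

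The key step is the substitution $f_i = p_i x_i$, with the slack $f_0 = \beta^k - \pb\cdot\xb$. This maps the budget set bijectively onto the simplex $\{f \geq \bm 0 : \sum_{i\in[n]_0} f_i = \beta^k\}$. It also turns the quasilinear objective into $\sum_{i\in[n]_0} (r^k_i/p_i) f_i - \beta^k$. The optimal face can then be read off and pushed forward by the linear map $f \mapsto (f_i/p_i)_{i\in[n]}$. The Fisher reformulation follows by projecting away the cash coordinate, as you do.

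Compared with the citation, your route makes the later uses of the observation independent of an external result. These are \cref{lemma:demand-correspondence}, \cref{lemma:exchange-existence} (implicitly) and \cref{theorem:exchange-reduction}.

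Your route also yields \cref{observation:bang-per-buck} as a corollary, which the paper likewise states without proof. For $\beta^k>0$, good $i$ appears positively in some demanded bundle if and only if $i \in I^k_{\auction}(\pb)$. That holds if and only if $i$ maximises $r^k_i/p_i$ over $[n]$ and $r^k_i/p_i \geq r^k_0/p_0 = 1$, that is, $r^k_i \geq p_i$.

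The citation route saves space and defers to the source where the money-good formulation was introduced.
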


\begin{lemma}
\label{lemma:demand-correspondence}
If a Fisher equilibrium of $\widetilde{\market}$ has cash price $1$, then deleting the cash good yields an Arctic equilibrium. Moreover, an equilibrium $(\pb^*,\xb^*)$ of $\market$ is an Arctic equilibrium if and only if $\alpha_{\min} \geq 1$.
\end{lemma}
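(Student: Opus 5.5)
The plan is to prove the two claims of \cref{lemma:demand-correspondence} separately. In each, we check the bidder conditions through \cref{observation:money-good-demand} and \cref{observation:bang-per-buck}, and the auctioneer condition through the fact that, in a costless Arctic auction with strictly positive prices, selling the full supply $\ssb$ is profit-maximising.

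\emph{First claim.} Take a Fisher equilibrium $(\widetilde{\pb},\widetilde{\xb})$ of $\widetilde{\market}$ with $\widetilde p_0=1$, and restrict prices and allocations to $[n]$. For each bidder $k$, its Fisher demand at $\widetilde{\pb}$ is the convex hull of $\frac{\beta^k}{\widetilde p_i}\eb^i$ over the bang-per-buck maximisers in $[n]_0$. Since the cash price is $1$ and $\widetilde r^k_0=1$, the second part of \cref{observation:money-good-demand} says that the projection of this demand onto the real goods is exactly $D^k(\pb)$. The projected allocation $\xb^k$ is therefore Arctic-demanded. Fisher market clearing sells every real good's full supply $s_i$, and Fisher equilibrium prices are strictly positive. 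The auctioneer, having no costs, therefore maximises $\pb\cdot\yb$ over $0\le\yb\le\ssb$ by selling $\ssb$. This gives a competitive equilibrium of $\auction$.

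\emph{Second claim, ($\Leftarrow$).} Suppose $\alpha_{\min}\ge1$, so every bidder's maximum bang per buck over $[n]$ is at least $1=r_0/p_0$. Then $I^k_{\auction}(\pb^*)$ equals the Fisher maximiser set over $[n]$, possibly together with $0$ when $\alpha_k=1$. The Fisher bundle $\xb^{*k}$ is a convex combination of the vertices $\frac{\beta^k}{p^*_i}\eb^i$ with $i$ a Fisher maximiser, and these are vertices in \cref{observation:money-good-demand}. Hence $\xb^{*k}\in D^k(\pb^*)$. The auctioneer condition holds as in the first claim, because Fisher clearing sells $\ssb$ at positive prices.

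\emph{Second claim, ($\Rightarrow$).} Suppose instead $\alpha_k<1$ for some $k$. Then $I^k_{\auction}(\pb^*)=\{0\}$, so $D^k(\pb^*)=\{\bm 0\}$. Fisher bidders spend their whole budget, so $\pb^*\cdot\xb^{*k}=\beta^k>0$, giving $\xb^{*k}\ne\bm0$. Thus bidder $k$ does not receive an Arctic-demanded bundle, and $(\pb^*,\xb^*)$ is not an Arctic equilibrium.

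The main delicacy is the tie case $\alpha_k=1$. There the Arctic demand set is strictly larger than the Fisher demand, since it also contains the zero vertex, so we only need the inclusion of Fisher demand into Arctic demand rather than equality. That step deserves careful wording; everything else is a direct application of \cref{observation:money-good-demand}.
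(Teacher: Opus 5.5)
Your proposal is correct and follows the paper's proof essentially step for step: both parts go through \cref{observation:money-good-demand}, and the auctioneer condition follows from full sale at positive prices with no costs. The only difference is cosmetic, in the direction $\alpha_k<1$: you show $D^k(\pb^*)=\{\bm 0\}$ while $\xb^{*k}\neq\bm 0$, whereas the paper notes that the budget-exhausting bundle gives negative quasilinear utility; these are the same observation.
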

\begin{proof}
For the first claim, \cref{observation:money-good-demand} shows that the projection of each bidder's allocation onto the real goods is Arctic-demanded. Every real good clears, and full sale is auctioneer-optimal because the auction has no costs.

For the second claim, bidder $k$ spends its entire Fisher budget on real goods with bang per buck $\alpha_k$. If $\alpha_{\min}\geq1$, these goods also maximise bang per buck when the money good, whose bang per buck is $1$, is included. Hence every $\xb^{*k}$ is Arctic-demanded by \cref{observation:money-good-demand}; since the real goods clear, $(\pb^*,\xb^*)$ is an Arctic equilibrium. Conversely, if $\alpha_{\min}<1$, choose a bidder $k$ with $\alpha_k<1$. Because this bidder spends its full budget on goods with bang per buck $\alpha_k$, its quasilinear utility is less than $0$, so $\xb^{*k}$ is not Arctic-demanded.
\end{proof}

We next record how Fisher equilibrium prices respond when the supply of one good increases. This will establish the monotonicity and continuity needed for the cash-supply search. We allow the perturbed good's initial supply to be zero to allow for zero supply of cash.

\begin{proposition}
\label{proposition:continuous-prices}
Consider two Fisher markets on goods $G$ with the same bidders and supplies $\ssb$ and $\widehat{\ssb}=\ssb+\delta\eb^{i'}$, where $\delta>0$. All initial supplies are positive except that $s_{i'}$ may be zero. If $\pb,\widehat{\pb}>0$ are respective equilibrium price vectors, then
\[
0\leq p_i-\widehat p_i\leq\frac{p_{i'}\delta}{s_i}, \qquad i \in G,
\]
with the upper bound asserted only when $s_i>0$.
\end{proposition}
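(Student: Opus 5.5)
The plan is to prove the lower bound $p_i-\widehat p_i\geq 0$ by an extremal price-ratio argument, and then to get the upper bound from conservation of total spending.

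\textbf{Money identity.} In a linear Fisher market every bidder spends her whole budget. She values some good positively, all prices are positive, and utility is increasing in the bundle. Every good with positive supply is fully sold, and a good with zero supply attracts no spending. Writing $B$ for the sum of all budgets, this gives
\[
\sum_{i\in G} p_i s_i \;=\; B \;=\; \sum_{i\in G}\widehat p_i\widehat s_i .
\]

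\textbf{Monotonicity ($\widehat p_i\leq p_i$).} Suppose not, and let $\theta\coloneqq\max_{i\in G}\widehat p_i/p_i>1$. Let $S$ be the set of goods attaining this maximum.
\begin{itemize}
\item Moving from $\widehat{\pb}$ to $\pb$, each bang per buck $r_i/p_i=(\widehat p_i/p_i)\,r_i/\widehat p_i$ is multiplied by exactly $\theta$ for $i\in S$ and by strictly less than $\theta$ for $i\notin S$.
\item Hence any bidder who, at $\widehat{\pb}$, spends anything on $S$ has a bang-per-buck maximiser in $S$ there. At $\pb$ her maximisers all lie in $S$, so at $\pb$ she spends her entire budget on $S$.
\item Let $\widehat M_S$ be the total budget of the bidders who spend on $S$ at $\widehat{\pb}$. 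Then
\[
\sum_{i\in S}\widehat p_i\widehat s_i\;\leq\;\widehat M_S\;\leq\;\sum_{i\in S}p_is_i .
\]
\item On the other hand, since $\widehat s_i\geq s_i$,
\[
\sum_{i\in S}\widehat p_i\widehat s_i\;\geq\;\theta\sum_{i\in S}p_is_i .
\]
\item If $\sum_{i\in S}p_is_i>0$, these two displays together with $\theta>1$ give a contradiction.
\item Otherwise every good in $S$ has zero supply in the first market, so $S=\{i'\}$ with $s_{i'}=0$. Then the left-hand side $\widehat p_{i'}\delta$ is strictly positive while $\sum_{i\in S}p_is_i=0$, again a contradiction.
\end{itemize}
Therefore $\widehat{\pb}\leq\pb$ coordinatewise.

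\textbf{Upper bound.} Substitute $\widehat{\ssb}=\ssb+\delta\eb^{i'}$ into the money identity and rearrange:
\[
\sum_{i\in G}(p_i-\widehat p_i)s_i=\widehat p_{i'}\delta\leq p_{i'}\delta ,
\]
where the inequality uses monotonicity. Every summand on the left is non-negative, so each one individually satisfies $(p_i-\widehat p_i)s_i\leq p_{i'}\delta$. Dividing by $s_i$ whenever $s_i>0$ gives the stated upper bound.

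The main obstacle is the monotonicity step, specifically the degenerate case in which the maximising set $S$ consists only of the zero-supply good $i'$; this is handled separately as above.
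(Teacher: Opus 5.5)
Your proof is correct and follows the paper's argument essentially step for step. It uses the same extremal ratio $\max_h \widehat p_h/p_h$ and maximising set $S$, the same sandwich of revenue on $S$ between the two markets (including the separate treatment of $S=\{i'\}$ with $s_{i'}=0$), and the same total-spending identity $\sum_h (p_h-\widehat p_h)s_h=\widehat p_{i'}\delta$ for the upper bound. The one detail the paper spells out that you leave implicit is what happens to a good $h\notin S$ with $r_h=0$: its bang per buck stays $0$ rather than being scaled by a factor below $\theta$, and this is still strictly below the bidder's maximum at $\pb$ because the good in $S$ she buys has positive value.
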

\begin{proof}
Write $\beta^k$ and $\rb^k$ for the common budgets and value vectors of bidders $k\in[m]$.
We first show that $\widehat{\pb}\leq\pb$. Suppose otherwise, and let
\[
\lambda\coloneqq\max_{h\in G}\frac{\widehat p_h}{p_h}>1,
\qquad
S\coloneqq\{h\in G\mid\widehat p_h=\lambda p_h\}.
\]
Let $\mathcal K$ be the bidders who spend a positive amount on $S$ at $\widehat{\pb}$. If bidder $k\in\mathcal K$ receives good $j\in S$, then for every $h\notin S$,
\[
\frac{r_j^k}{p_j}
=\lambda\frac{r_j^k}{\widehat p_j}
\geq\lambda\frac{r_h^k}{\widehat p_h}
\geq\frac{r_h^k}{p_h}.
\]
The last inequality is strict if $r_h^k>0$, since $\widehat p_h<\lambda p_h$. If $r_h^k=0$, the first expression is strictly positive because $r_j^k>0$. Thus, at $\pb$, every bidder in $\mathcal K$ spends its entire budget on $S$. The old revenue from $S$ is therefore at least $\sum_{k\in\mathcal K}\beta^k$, while the new revenue from $S$ is at most that amount. If the old revenue from $S$ is positive, then
\[
\sum_{h\in S}\widehat p_h\widehat s_h
=\lambda\sum_{h\in S}p_h\widehat s_h
\geq\lambda\sum_{h\in S}p_hs_h
>\sum_{h\in S}p_hs_h,
\]
a contradiction. If the old revenue is zero, then $S=\{i'\}$ and $s_{i'}=0$, since all other supplies and all prices are positive. The new revenue from $S$ is then $\widehat p_{i'}\delta>0$, again a contradiction. Hence $\widehat{\pb}\leq\pb$.

In both equilibria, total revenue equals the total budget. Consequently,
\begin{equation}
\label{eq:profit-difference}
\sum_{h\in G}(p_h-\widehat p_h)s_h=\widehat p_{i'}\delta.
\end{equation}
Every summand on the left is non-negative, so for any $i\in G$,
\[
0\leq(p_i-\widehat p_i)s_i
\leq\sum_{h\in G}(p_h-\widehat p_h)s_h
=\widehat p_{i'}\delta
\leq p_{i'}\delta.
\]
For every $i$ with $s_i>0$, division by $s_i$ proves the upper bound on the price decrease.
\end{proof}

\begin{lemma}
\label{lemma:cash-supply}
The cash-price function $\widetilde p_0$ is nonincreasing and continuous on $\R_{>0}$, and
\[
\lim_{s_0\downarrow0}\widetilde p_0(s_0)=\frac{1}{\alpha_{\min}},
\qquad
\lim_{s_0\downarrow0}\widetilde p_i(s_0)=p_i^*\quad(i\in[n]).
\]
For $U\coloneqq2\sum_{k\in[m]}\beta^k$, we have $\widetilde p_0(U)\leq\tfrac12$. Hence, if $\alpha_{\min}<1$, there exists $s_0\in(0,U)$ such that $\widetilde p_0(s_0)=1$.
\end{lemma}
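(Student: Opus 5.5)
Monotonicity and continuity come from \cref{proposition:continuous-prices}, applied with $i'=0$ to the market $\widetilde{\market}$ at cash supplies $s_0<s_0'$ and $\delta=s_0'-s_0$. It gives $0\leq\widetilde p_i(s_0)-\widetilde p_i(s_0')\leq \widetilde p_0(s_0)\delta/s_i$ for every good $i$ with $s_i>0$. Taking $i=0$ yields
\[
0\leq\widetilde p_0(s_0)-\widetilde p_0(s_0')\leq\widetilde p_0(s_0)\,\delta/s_0 .
\]
This shows that $\widetilde p_0$ is nonincreasing. It is also locally Lipschitz on $\R_{>0}$, because $\widetilde p_0(s_0)$ is bounded on compact subsets: it is nonincreasing, so it is bounded by its value at the left endpoint. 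Hence $\widetilde p_0$ is continuous.

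For the bound at $U$, note that total revenue equals total budget in any Fisher equilibrium. Therefore $\widetilde p_0(U)\,U\leq\sum_k\beta^k$, which gives $\widetilde p_0(U)\leq\tfrac12$.

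For the limits, I would apply \cref{proposition:continuous-prices} again, now comparing $\market$ extended with zero cash supply against $\widetilde{\market}$ at supply $s_0$. The zero-supply market has the cash good present but unsupplied, which the proposition explicitly allows. First I identify its equilibrium. Take prices $(p_0,\pb^*)$ with $p_0\coloneqq 1/\alpha_{\min}$ and the allocation $\xb^*$. Each bidder $k$ has cash bang per buck $1/p_0=\alpha_{\min}\leq\alpha_k$, so spending only on real goods with bang per buck $\alpha_k$ remains optimal. Cash clears because nobody buys it and its supply is $0$. So this is an equilibrium of the zero-supply market.

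The proposition then gives, for every good $i\in[n]$,
\[
0\leq p_i^*-\widetilde p_i(s_0)\leq p_0 s_0/s_i\to0,
\]
which proves $\widetilde p_i(s_0)\to p_i^*$. It also gives $\widetilde p_0(s_0)\leq 1/\alpha_{\min}$.

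\textbf{Main obstacle: the lower bound on the cash price.} I need $\liminf_{s_0\downarrow0}\widetilde p_0(s_0)\geq 1/\alpha_{\min}$. Here the equilibrium of the zero-supply market is not unique in its cash price, since any $p_0\geq 1/\alpha_{\min}$ works, so the upper estimate above does not pin the limit down. I would argue directly. Let $k$ attain $\alpha_{\min}$ and suppose $\widetilde p_0(s_0)\leq q<1/\alpha_{\min}$ along a sequence $s_0\downarrow 0$. The real prices converge to $\pb^*$, so bidder $k$'s bang per buck on real goods tends to $\alpha_{\min}<1/q$. Eventually, then, cash strictly maximises bidder $k$'s bang per buck, and $k$ spends all of $\beta^k$ on cash. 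This forces $\widetilde p_0(s_0)\,s_0\geq\beta^k$, i.e.\ $\widetilde p_0(s_0)\geq\beta^k/s_0\to\infty$, a contradiction. Combining this with the upper bound gives $\lim_{s_0\downarrow0}\widetilde p_0(s_0)=1/\alpha_{\min}$.

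Finally, suppose $\alpha_{\min}<1$, so the limit $1/\alpha_{\min}$ exceeds $1$. Then some small $\epsilon>0$ has $\widetilde p_0(\epsilon)>1$, while $\widetilde p_0(U)\leq\tfrac12<1$. By continuity and the intermediate value theorem on $[\epsilon,U]$, there is some $s_0\in(0,U)$ with $\widetilde p_0(s_0)=1$.
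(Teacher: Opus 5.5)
Your proof is correct and follows the paper's argument step for step: \cref{proposition:continuous-prices} gives monotonicity and local Lipschitz continuity, total revenue gives the bound at $U$, and the auxiliary zero-supply equilibrium at prices $(1/\alpha_{\min},\pb^*)$ gives both the upper bound $\widetilde p_0(s_0)\leq 1/\alpha_{\min}$ and the convergence of the real-good prices. The only variation is in the lower bound on the limiting cash price. You argue by contradiction: the bidder attaining $\alpha_{\min}$ would eventually spend all of $\beta^k$ on cash, which forces $\widetilde p_0(s_0)\geq\beta^k/s_0$. The paper instead shows that for $s_0<\beta_{\min}\alpha_{\min}$ every bidder buys a real good, which yields the identity $\widetilde\alpha_{\min}(s_0)=1/\widetilde p_0(s_0)$ directly. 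Both arguments rest on the vanishing cash expenditure $\widetilde p_0(s_0)s_0$, and either works.
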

\begin{proof}
We first establish monotonicity and continuity. For $0<a<b$, apply \cref{proposition:continuous-prices} with $i=i'=0$, initial cash supply $a$, and increase $\delta=b-a$. The old and new cash prices are $\widetilde p_0(a)$ and $\widetilde p_0(b)$, so
\[
0\leq\widetilde p_0(a)-\widetilde p_0(b)
\leq\frac{\widetilde p_0(a)(b-a)}{a}.
\]
Thus $\widetilde p_0$ is nonincreasing. On any compact interval of positive supplies, $\widetilde p_0(a)/a$ is bounded, so the same inequality gives local Lipschitz continuity.

It remains to identify the limiting prices as supply approaches zero. Set $\bar p_0\coloneqq1/\alpha_{\min}$. At prices $(\bar p_0,\pb^*)$, bidder $k$'s bang per buck for cash is $\alpha_{\min}\leq\alpha_k$. Hence adding a zero-supply cash good and allocating none of it leaves each allocation $\xb^{*k}$ demanded. These prices therefore support an auxiliary Fisher equilibrium in which only cash has zero supply.

Applying \cref{proposition:continuous-prices} to this auxiliary equilibrium, with cash supply increasing from zero to $s_0>0$, gives
\[
\widetilde p_0(s_0)\leq\frac{1}{\alpha_{\min}},
\qquad
0\leq p_i^*-\widetilde p_i(s_0)
\leq\frac{s_0}{\alpha_{\min}s_i}\quad(i\in[n]).
\]
Thus $\widetilde p_i(s_0)\to p_i^*$ as $s_0\downarrow0$.

To identify the limiting cash price, let $\beta_{\min}\coloneqq\min_k\beta^k>0$. Whenever $s_0<\beta_{\min}\alpha_{\min}$, total expenditure on cash satisfies
\[
\widetilde p_0(s_0)s_0\leq\frac{s_0}{\alpha_{\min}}<\beta_{\min}.
\]
No bidder can therefore spend its entire budget on cash, so every bidder consumes a real good.

For $s_0>0$, define the best bang per buck on real goods in the cash market by
\[
\widetilde{\alpha}_k(s_0)\coloneqq\max_{i\in[n]}\frac{r_i^k}{\widetilde p_i(s_0)},
\qquad
\widetilde{\alpha}_{\min}(s_0)\coloneqq\min_{k\in[m]}\widetilde{\alpha}_k(s_0).
\]
For the small supplies just considered, every bidder consumes a real good, so cash cannot offer strictly greater bang per buck. Hence $\widetilde{\alpha}_k(s_0)\geq1/\widetilde p_0(s_0)$ for every $k$. Some bidder consumes cash because its positive supply clears; this bidder also consumes a real good, so equality holds for that bidder. Therefore
\[
\widetilde{\alpha}_{\min}(s_0)=\frac{1}{\widetilde p_0(s_0)}.
\]
Since the real-good prices converge to $p_i^*$, we have $\widetilde{\alpha}_{\min}(s_0)\to\alpha_{\min}$. It follows that
\[
\lim_{s_0\downarrow0}\widetilde p_0(s_0)=\frac{1}{\alpha_{\min}}=\bar p_0.
\]

Revenue from cash cannot exceed the total budget, so
\[
\widetilde p_0(U)U\leq\sum_{k\in[m]}\beta^k=\frac U2.
\]
Thus $\widetilde p_0(U)\leq\tfrac12$. If $\alpha_{\min}<1$, the limit at zero is greater than $1$. Extending the cash-price function continuously to zero, the intermediate value theorem gives a supply $s_0\in(0,U)$ with cash price $1$.
\end{proof}

Bisection need not encounter a supply with cash price exactly $1$. The next two lemmas let us recover one exactly from a sufficiently short interval: first we bound the encoding length of some suitable supply, and then we reconstruct it.

\begin{lemma}
\label{lemma:rational-supply}
Suppose $\widetilde p_0(s_0)=1$ for some $s_0>0$. From the rational budgets, values and real-good supplies of $\widetilde{\market}$, one can compute in polynomial time an integer bound $L$, polynomial in their encoding length, so that some supply $u/v$, in lowest terms, satisfies $0<u,v\leq2^L$ and $\widetilde p_0(u/v)=1$.
\end{lemma}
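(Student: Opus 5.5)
The plan is to show that a supply $s_0$ with cash price $1$ is a vertex of a polyhedron described by a linear system. Its coefficients come from the input data. This system is indexed by a finite combinatorial choice: the equilibrium spending graph. Standard bounds on vertices of rational polyhedra (Cramer's rule / Hadamard bound) then give the bound $L$.

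First, fix a supply $s_0^\star>0$ with $\widetilde p_0(s_0^\star)=1$, and an equilibrium $(\widetilde\pb,\widetilde\xb)$ of $\widetilde{\market}$ there. Record its support graph: the set $E$ of pairs $(k,i)$ with $\widetilde x^k_i>0$, together with the forest structure one can impose by the usual cycle-cancelling argument on money flows. Then consider the linear system in the unknowns $(\pb,\bm f,s_0)$, where $f_{ki}=p_ix^k_i$ are money flows:
\[
p_0=1,\qquad r^k_ip_{i'}=r^k_{i'}p_i\ \ ((k,i),(k,i')\in E),\qquad \sum_i f_{ki}=\beta^k,\qquad \sum_k f_{ki}=p_is_i\ (i\in[n]),\qquad \sum_k f_{k0}=s_0,
\]
together with the inequalities $f_{ki}\geq0$, $f_{ki}=0$ for $(k,i)\notin E$, and the optimality inequalities $r^k_ip_{h}\geq r^k_hp_i$ for $(k,i)\in E$ and all $h$. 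The money-flow formulation makes every constraint linear, including $\sum_k f_{k0}=p_0s_0=s_0$. The system is feasible, since the point from $s_0^\star$ satisfies it. Every feasible point with $s_0>0$ is a Fisher equilibrium at cash supply $s_0$ with cash price $1$, because the optimality inequalities, budget exhaustion and clearing are exactly the equilibrium conditions. The feasible region is a polyhedron $P$ whose coefficients are input numbers and $1$'s.

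Second, I will argue that $P$ has a vertex with $s_0>0$. All variables are nonnegative: prices lie in the positive orthant with $p_0=1$, and flows are $\geq0$. Hence $P$ is pointed and its projection to $s_0$ is an interval attained at vertices. The expected obstacle is that the minimising vertex might have $s_0=0$. To avoid this, I will take the vertex maximising $s_0$. This maximum is bounded, because $s_0=\sum_kf_{k0}\leq\sum_k\beta^k$, and it is $\geq s_0^\star>0$. So the maximising face contains a vertex with $s_0>0$. Monotonicity and uniqueness of Fisher prices are not even needed here.

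Third, apply the standard bound: a vertex of $\{x: Ax\leq b\}$ with rational data of total encoding length $\langle A,b\rangle$ has coordinates with numerator and denominator at most $2^{O(\langle A,b\rangle)}$. The constraint matrix has $O(mn)$ rows and columns, and its entries are products of at most two input rationals. So $L$ can be taken to be a concrete polynomial, say $4(\text{number of variables})\cdot\langle A,b\rangle$, which is computable from the input without knowing $E$. The encoding length of the system is bounded uniformly over all choices of $E$ by including all $O(mn^2)$ potential constraints. Clearing denominators of the rational data first keeps entries integral, which makes the Hadamard bound directly applicable. I expect the main care to lie in this bookkeeping and in justifying that the bound is independent of $E$. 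The existence argument itself is the routine vertex step.
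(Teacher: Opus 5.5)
Your proposal is correct and follows essentially the paper's route: fix the combinatorial structure of one equilibrium with cash price $1$, write the equilibrium conditions as a rational polyhedron, and maximise $s_0$ (bounded by $\sum_k\beta^k$) to get a vertex with $s_0>0$. You then apply vertex-complexity bounds that hold uniformly over all possible supports, exactly as the paper does.

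The only difference is the choice of variables. You linearise with prices and money flows $f_{ki}=p_ix^k_i$, using $p_0=1$ to make cash clearing linear. The paper instead uses inverse prices $q_i=1/p_i$, quantities and bang-per-buck levels $a_k$. Either works.

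You should, however, state explicitly that every feasible point has strictly positive prices, as the paper does for $q_i>0$. The argument is short: if $p_h=0$, take a positive-budget bidder with $r^k_h>0$. Her optimality inequalities force $p_i=0$ for every good $i$ she spends on. That contradicts $p_0=1$ if $i=0$, and real-good clearing otherwise.
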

\begin{proof}
Fix an arbitrary Fisher equilibrium $(\widetilde{\pb}^*,\widetilde{\xb}^*)$ for a positive cash supply $s^*_0$ with $\widetilde{p}^*_0=1$. For each bid $k$, let
\[
M^*_k \coloneqq \argmax_{i \in [n]_0} \frac{\widetilde r^k_i}{\widetilde p^*_i},
\qquad
a^*_k \coloneqq \max_{i \in [n]_0} \frac{\widetilde r^k_i}{\widetilde p^*_i}.
\]
We construct a rational polyhedron $P$ whose feasible points describe Fisher equilibria with cash price 1 in which each bid's allocation is supported on $M^*_k$. The variables are $q_i$ (which we interpret as the inverse of the price, $q_i = 1/p_i$) for each $i \in [n]_0$, bang-per-buck levels $a_k$ for bids $k \in [m]$, allocation quantities $x^k_i$ for bids $k \in [m]$ and goods $i \in [n]_0$, and the cash supply $s_0$. Add the constraints
\begin{subequations}
\renewcommand{\theequation}{C.\arabic{equation}}
\begin{alignat}{2}
q_0 &= 1, \label{c:cash-price}\\
q_i &\geq 0 &&\quad \forall i \in [n]_0, \label{c:q-nonneg}\\
\widetilde r^k_i q_i &= a_k &&\quad \forall k \in [m],\ i \in M^*_k, \label{c:bpb-tight}\\
\widetilde r^k_i q_i &\leq a_k &&\quad \forall k \in [m],\ i \in [n]_0 \setminus M^*_k, \label{c:bpb-slack}\\
x^k_i &\geq 0 &&\quad \forall k \in [m],\ i \in [n]_0, \label{c:x-nonneg}\\
x^k_i &= 0 &&\quad \forall k \in [m],\ i \in [n]_0 \setminus M^*_k, \label{c:support}\\
a_k \beta^k &= \sum_{i \in M^*_k} \widetilde r^k_i x^k_i &&\quad \forall k \in [m], \label{c:budget}\\
\sum_{k \in [m]} x^k_i &= s_i &&\quad \forall i \in [n], \label{c:clearing}\\
s_0 &= \sum_{k \in [m]} x^k_0. \label{c:cash-supply}
\end{alignat}
\end{subequations}

All coefficients of $P$ are rational input numbers. The point $(\widetilde{\qb}^*,\widetilde{\xb}^*,\ab^*,s^*_0)$, with $\widetilde q^*_i = 1/\widetilde p^*_i$, is feasible, so $P$ is non-empty and contains a point with $s_0>0$.

Conversely, let $(\qb,\xb,\ab,s_0)$ be any feasible point of $P$ with $s_0>0$. First note that $q_i>0$ for every good $i \in [n]_0$. Indeed, $q_0=1$. Moreover, the constraints involving the cash good imply $a_k \geq 1$ for every bid $k$: if $0 \in M^*_k$, then $a_k=\widetilde r^k_0q_0=1$, while if $0 \notin M^*_k$, then $1=\widetilde r^k_0q_0 \leq a_k$. For each real good $i \in [n]$, market clearing implies $\sum_k x^k_i=s_i>0$, so some bid $k$ has $x^k_i>0$. Then $i \in M^*_k$, and constraint~\labelcref{c:bpb-tight} implies $q_i>0$, since $a_k\geq 1$ and all values are non-negative. Define prices by $p_i = 1/q_i$. Constraint~\labelcref{c:cash-price} enforces $p_0=1$. Constraints~\labelcref{c:bpb-tight,c:bpb-slack} imply that $\max_{i \in [n]_0} \widetilde r^k_i q_i = a_k$, with equality attained on $M^*_k$. Goods outside $M^*_k$ may also attain this maximum, so bid $k$'s set of goods maximising bang per buck contains $M^*_k$; however, since $x^k_i = 0$ for $i \notin M^*_k$ by constraint~\labelcref{c:support}, the allocation is supported on maximum bang-per-buck goods, as required for equilibrium.
Moreover, since $\widetilde r^k_i q_i=a_k$ on the support of bid $k$'s allocation (and in particular $\widetilde r^k_0 = 1$ and $q_0 = 1$ for the cash good), we see that the bid exhausts its budget:
\[
\sum_{i \in [n]_0} p_i x^k_i
= \sum_{i \in M^*_k} \frac{x^k_i}{q_i} = \frac{1}{a_k}\sum_{i \in M^*_k} \widetilde r^k_i x^k_i
= \beta^k.
\]
Here the first equality uses $x^k_i=0$ for $i \notin M^*_k$, the second uses constraint~\labelcref{c:bpb-tight}, and the last uses constraint~\labelcref{c:budget}. Constraints~\labelcref{c:clearing,c:cash-supply} ensure that every real good is cleared and the cash supply is equal to the aggregate cash allocation. Hence the feasible point represents a Fisher equilibrium with cash price $1$ and cash supply $s_0$.

It remains to obtain a rational feasible point with $s_0>0$. The cash supply is bounded on $P$: if $0 \notin M^*_k$, then $x^k_0=0$, while if $0 \in M^*_k$, then $a_k=1$ and constraint~\labelcref{c:budget} implies $x^k_0 \leq \beta^k$. Therefore, by constraint~\labelcref{c:cash-supply}, $s_0 \leq \sum_{k \in [m]}\beta^k$. Since $P$ is a rational polyhedron and contains a feasible point with positive $s_0$, the rational linear program $\max\{s_0 \mid (\qb,\xb,\ab,s_0)\in P\}$ has a rational optimal solution with $s_0>0$. All variables of $P$ are non-negative, so $P$ contains no line and an optimal basic feasible solution exists. Standard polyhedral bounds \citep[Corollary 10.2a]{Sch-book} guarantee that there exists an optimal basic feasible solution that has polynomial bit length. More explicitly, we choose the bound using the maximum numbers of rows and columns and the maximum coefficient bit length over all possible demand sets; these quantities are polynomially bounded by the input, so the bound is computable without knowing the equilibrium demand graph. As shown above, any resulting feasible point with $s_0>0$ corresponds to a valid Fisher equilibrium with cash price~$1$.
\end{proof}

\begin{lemma}
\label{lemma:rational-reconstruction}
Let $0\leq a<b$ be rational and let $B\geq1$ be an integer with $b-a<1/(2B^2)$. In time polynomial in the encoding lengths of $a$, $b$ and $B$, one can find the unique rational in $[a,b]$, if any, whose reduced numerator lies in $[1,B]$ and whose denominator lies in $[1,B]$.
\end{lemma}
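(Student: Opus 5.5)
Uniqueness comes first, from a separation argument. If $u/v$ and $u'/v'$ are distinct reduced rationals with $1\le u,u',v,v'\le B$, then
\[
\left|\frac{u}{v}-\frac{u'}{v'}\right| = \frac{|uv'-u'v|}{vv'} \geq \frac{1}{vv'} \geq \frac{1}{B^2} > b-a .
\]
So at most one such rational lies in $[a,b]$.

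For existence and computation, I will use continued fractions, the standard rational-reconstruction technique. I compute the continued fraction expansion of the midpoint $c=(a+b)/2$, or more robustly of the interval: expand $a$ and $b$ simultaneously, and at each step look at the convergents and intermediate fractions lying in $[a,b]$. The cleanest concrete route is the classical fact behind Legendre's theorem. Suppose the target $u/v$ exists with $v\le B$. Then
\[
|c-u/v| \le b-a < \frac{1}{2B^2} \le \frac{1}{2v^2},
\]
so $u/v$ is a convergent of the continued fraction of $c$. The algorithm is therefore to run the Euclidean algorithm on the numerator and denominator of $c$, generating the convergents $p_t/q_t$, and to return the first convergent that lies in $[a,b]$ and satisfies $1\le p_t\le B$ and $1\le q_t\le B$. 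Each candidate is tested by exact rational comparison, and if none qualifies the algorithm reports that no such rational exists. Convergents are automatically in lowest terms, so the reduced-form requirement is met. If $a=0$, the numerator condition $u\ge 1$ excludes $0$ itself, so no special case is needed.

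The main thing to verify is running time. The Euclidean algorithm on $c$ takes a number of steps linear in the bit length of $c$, and the convergents have bit length bounded by that of $c$. The encoding length of $c$ is polynomial in those of $a$ and $b$, so the total work is polynomial in the encoding lengths of $a$, $b$ and $B$. The only delicate point is applying Legendre's criterion: it needs the strict inequality $|c-u/v|<1/(2v^2)$, which is exactly why the hypothesis is $b-a<1/(2B^2)$ rather than $1/B^2$. The uniqueness argument above also guarantees that the first qualifying convergent is the right one.
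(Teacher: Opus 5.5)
Your proof is correct, but it reaches the candidate through a different key lemma than the paper does. Both arguments get uniqueness from the same separation bound $|u/v-u'/v'|\ge 1/B^2$.

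For the computation, the paper calls a closed-interval continued-fraction procedure that returns the reduced rational of \emph{smallest denominator} in $[a,b]$. It then uses the separation bound a second time. If a candidate exists, this smallest-denominator rational has denominator at most $B$, so it must coincide with the candidate. Checking its numerator and denominator therefore decides the question.

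You instead expand only the midpoint $c$ with the ordinary Euclidean algorithm. You invoke Legendre's criterion to guarantee that any candidate appears among the convergents of $c$, and then scan them with exact membership and size tests. For rational $c$, Legendre's theorem does hold for the expansion Euclid produces, whose last partial quotient is at least $2$, so this step is sound.

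Your route is more self-contained: plain Euclid on a single rational plus a classical criterion, with no interval-specific subroutine. The paper's route needs no Diophantine-approximation input beyond the separation bound.

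One small correction to your commentary. Since $c$ is the midpoint, $|c-u/v|\le (b-a)/2$, so your argument already works under the weaker hypothesis $b-a<1/B^2$. The factor $2$ in the statement is therefore not what Legendre's criterion forces; the paper's argument also needs only $b-a<1/B^2$.
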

\begin{proof}
Two distinct rationals with denominators at most $B$ differ by at least $1/B^2$, so $[a,b]$ contains at most one candidate. The closed-interval continued-fraction algorithm \citep[Section 5.1]{Sch-book} returns a reduced rational of smallest denominator in $[a,b]$ in time polynomial in the encoding lengths of $a$ and $b$. If a candidate exists, this rational has denominator at most $B$ and hence equals the candidate by the separation bound. Checking its numerator and denominator therefore either returns the unique candidate or certifies that none exists.
\end{proof}

In the branch $\alpha_{\min}<1$, let $L$ be the bound from \cref{lemma:rational-supply} and set $B = 2^L$. We binary search on $[0,U]$, using $0$ only as a lower bound and never querying $\widetilde p_0$ there. The invariant is that every solution remains in the current interval. Since $\widetilde p_0$ is nonincreasing by \cref{lemma:cash-supply}, if the midpoint price is greater than $1$ every solution lies to the right; if it is less than $1$, every solution lies to the left; and if it is equal to $1$, we are done. We stop once the final interval has length less than $1/(2B^2)$.

If the search has not already found an exact solution, the final interval still contains the bounded-height solution guaranteed by \cref{lemma:rational-supply}. Hence \cref{lemma:rational-reconstruction} returns that solution. A final Fisher oracle call at this cash supply computes an equilibrium whose cash price is $1$, and its restriction is an Arctic equilibrium by \cref{lemma:demand-correspondence}.

\begin{theorem}
\label{theorem:fisher-reduction}
Costless Arctic auctions with binary-rational data reduce to linear Fisher markets by a polynomial-time Turing reduction. The reduction makes at most $N+2$ calls to an exact Fisher equilibrium oracle, where $N$ is polynomial in the input bit length.
\end{theorem}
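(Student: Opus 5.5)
The proof will certify that \cref{algorithm:reduction-2} is a correct polynomial-time Turing reduction. We count oracle calls, check that every oracle input has polynomial encoding length, and check correctness of the returned outcome in each branch. The argument has three parts.

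\textbf{The branch $\alpha_{\min}\geq 1$.} The algorithm makes one oracle call on $\market$. By \cref{lemma:demand-correspondence}, the returned equilibrium $(\pb^*,\xb^*)$ is already an Arctic equilibrium. Computing the $\alpha_k$ needs only $O(mn)$ divisions and comparisons on polynomial-length rationals. Exactness of the comparison with $1$ follows because the oracle returns rationals of polynomial encoding length.

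\textbf{The branch $\alpha_{\min}<1$: invariant and termination.} By \cref{lemma:cash-supply}, a solution exists in $(0,U)$ and $\widetilde p_0$ is nonincreasing. I will maintain the invariant that the set of supplies $s_0$ with $\widetilde p_0(s_0)=1$ has a nonempty intersection with $[a,b]$. More precisely, I will maintain that the bounded-height solution $u/v$ of \cref{lemma:rational-supply} lies in $[a,b]$.

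\begin{itemize}
\item Initially, $u/v\in(0,U)$: any solution is at most $\sum_k\beta^k<U$, since cash revenue at price $1$ is bounded by the total budget.
\item At each step, monotonicity shows the update never discards a solution. If $\widetilde p_0(c)>1$, every solution $s_0$ satisfies $s_0\geq c$, since a solution below $c$ would force $\widetilde p_0(c)\leq 1$. The case $\widetilde p_0(c)<1$ is symmetric. If $\widetilde p_0(c)=1$, the early return is correct by \cref{lemma:demand-correspondence}.
\end{itemize}

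Let $B=2^L$. Since $U$ has polynomial bit length and $L$ is polynomial, $N=O(L+\log U)$ is polynomial. After $N$ halvings, $b-a=U/2^N<2^{-2L-1}=1/(2B^2)$.

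\textbf{Reconstruction and the final call.} I then apply \cref{lemma:rational-reconstruction}, which returns exactly $u/v$, and make one more oracle call at that supply. By construction this call has cash price $1$, so \cref{lemma:demand-correspondence} again gives an Arctic equilibrium after deleting cash.

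Total calls: one for $\market$, at most $N$ in the loop, and one final call, giving $N+2$.

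\textbf{Encoding lengths.} Every queried midpoint $c$ is a dyadic rational $U\cdot j/2^t$ with $t\leq N$, so it has polynomial length, and so do $a$ and $b$. The reconstructed supply has length at most $2L$. All other arithmetic is polynomial.

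\textbf{Main obstacle.} The main obstacle is the subtle point in the invariant. Solutions may form an interval rather than a single point, and bisection may discard some solutions. I must argue that the specific bounded-height solution from \cref{lemma:rational-supply} is never discarded. The plan is to observe that monotonicity actually gives the stronger statement that no solution of any kind is ever discarded: all solutions lie in $[a,b]$ at every step. In particular, $u/v$ does. This requires $\widetilde p_0(c)\neq 1$ in the non-returning branches, which is exactly what the algorithm checks.
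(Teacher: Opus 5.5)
Your proposal is correct and follows essentially the same argument as the paper. It makes one oracle call on $\market$, then bisects with the invariant, from the monotonicity in \cref{lemma:cash-supply}, that every supply with cash price $1$ stays in $[a,b]$; the bounded-height solution of \cref{lemma:rational-supply} therefore survives, is recovered by \cref{lemma:rational-reconstruction}, and one final call gives $N+2$ calls in total with polynomial-length dyadic queries. The only detail you omit is the trivial one that discarded zero-budget bids are assigned the zero bundle.
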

\begin{proof}
\Cref{algorithm:reduction-2} first makes one oracle call for $\market$. If $\alpha_{\min}\geq1$, \cref{lemma:demand-correspondence} proves that this equilibrium can be returned immediately. Otherwise, \cref{lemma:cash-supply} guarantees a solution in $(0,U)$. The algorithm then makes at most $N$ binary-search calls and, unless one of them finds an exact solution, one call at the reconstructed supply. The worst-case number of oracle calls is therefore $N+2$, and correctness of the returned positive-supply equilibrium follows from \cref{lemma:demand-correspondence}.

The values $\alpha_k$ and their minimum $\alpha_{\min}$ are computed from the rational prices of $\market$ using $O(mn)$ exact rational operations and comparisons, and have polynomial encoding length. The minimal choice in the algorithm gives $N\leq2L+2+\max\{0,\lceil\log_2 U\rceil\}$, so $N$ is polynomial. Every queried supply is a rational multiple of $U$ with denominator at most $2^N$, and hence has polynomial encoding length. The oracle outputs permit exact comparison with $1$, so the entire procedure is a polynomial-time Turing reduction. Discarded zero-budget bids are assigned the zero bundle.
\end{proof}

\section{Uniqueness of Equilibrium Prices}
\label{section:uniqueness}

In Arctic auctions without costs, equilibrium prices are unique \citep{FinsterGoldbergLock2025}. With costs, if there exist two equilibria, then every good sold in positive quantity in both has the same price in the two equilibria, as we show in \cref{proposition:price-uniqueness}; non-uniqueness can therefore only involve goods unsold in at least one of them. \Cref{example:non-unique} illustrates this.

\begin{proposition}
\label{proposition:price-uniqueness}
Fix an Arctic auction with stepwise marginal costs. If a good is sold in positive quantity in two competitive equilibria, then its price is the same in both equilibria. In particular, any two competitive equilibria in which every good is sold in positive quantity have the same price vector.
\end{proposition}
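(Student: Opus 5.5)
The plan is to characterise equilibrium prices as the minimisers of a single convex function of log-prices. Strict convexity in the coordinates of goods that are actually sold then forces those coordinates to agree across equilibria. An alternative would be a direct ``maximal price ratio'' argument like the one in the proof of \cref{proposition:continuous-prices}. I expect it to break down, because unsold goods can have arbitrary price ratios and spoil the comparison of bang per buck, so I use the convex-potential route.

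\emph{Setting up the potential.} Write prices as $p_i=e^{q_i}$, which is legitimate because equilibrium prices are strictly positive. For each good define the auctioneer's profit function $\pi_i(p)\coloneqq\max_{0\le y\le s_i}\{py-C_i(y)\}$. It is convex and piecewise linear in $p$, with slope equal to the optimal supply; on the interval $[c_{ik},c_{i,k+1}]$ the slope is $S_{ik}$. Consider
\[
F(\qb)\coloneqq\sum_{j\in[m]}\beta^j\max\Bigl(0,\ \max_{i\in[n]}\bigl(\log r^j_i-q_i\bigr)\Bigr)+\sum_{i\in[n]}\pi_i\bigl(e^{q_i}\bigr),
\]
where terms with $r^j_i=0$ are omitted. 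The first summand is convex because it is a maximum of affine functions in $\qb$. The second is convex because each $\pi_i$ is convex and nondecreasing and $e^{q}$ is convex.

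\emph{Equilibria are minimisers.} First I compute subdifferentials. For bid $j$, the subgradients of its term with respect to $q_i$ are $-\beta^j\theta^j_i$, where $\theta^j\ge 0$ is supported on the goods maximising bang per buck, and $\sum_i\theta^j_i\le 1$ with equality if the maximum bang per buck exceeds $1$. By \cref{observation:bang-per-buck}, these vectors $-\beta^j\theta^j$ are exactly the expenditure vectors $-(p_ix^j_i)_i$ of demanded bundles $\xb^j$. The subgradients of $\pi_i(e^{q_i})$ are $p_iy_i$ for profit-maximising supplies $y_i$. Hence $0\in\partial F(\qb)$ holds exactly when there are demanded bundles and a profit-maximising supply with $p_i\sum_j x^j_i=p_iy_i$ for every $i$, that is, exactly when $\pb=e^{\qb}$ supports a competitive equilibrium. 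Thus the set of equilibrium log-price vectors is $\argmin F$, which is convex, and $F$ is constant on the segment between any two equilibria $\qb,\qb'$.

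\emph{The key step: strict convexity on sold goods.} This is the step I expect to be hardest to make rigorous. Since $F$ is a sum of convex functions and is affine (indeed constant) on $[\qb,\qb']$, every summand must be affine along that segment. In particular each $\phi_i(t)\coloneqq\pi_i(e^{t})$ must be affine in $t$ between $q_i$ and $q'_i$. Now suppose good $i$ is sold in positive quantity in both equilibria, say with quantities $z_i,z'_i>0$, and suppose for contradiction that $q_i<q'_i$. By monotonicity of the optimal supply in price, every optimal supply at prices in $[p_i,p'_i]$ is at least $z_i>0$. Hence, on each interval where $\pi_i$ has constant slope $S>0$, we have $\phi_i(t)=Se^t-\text{const}$, which is strictly convex. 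Where the slope changes, it only increases, so $\phi_i'$ jumps upward. So $\phi_i$ is strictly convex on $[q_i,q'_i]$, a contradiction. Therefore $q_i=q'_i$.

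The care needed is in applying this to the correct equilibrium allocations: $z_i$ is the sold quantity in the equilibrium at $\qb$, and it lies in $\partial\pi_i(p_i)$. So the left derivative of $\pi_i$ just above $p_i$ is at least $z_i$. The second statement of the proposition follows immediately by applying the first to every good.
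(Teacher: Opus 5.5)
Your argument is correct, and it takes a genuinely different route from the paper.

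The paper never builds a potential. It lifts an arbitrary equilibrium of the auction with costs to an equilibrium of the transformed costless auction $\widehat{\auction}$ from \cref{section:cost-reduction}:
\begin{itemize}
\item each bidder's allocation of good $i$ is split across varieties so that cheaper steps fill first;
\item varieties up to the last sold step are priced at $p_i$, and later ones at $c_{ik}$;
\item unsold units go to the buyback bids.
\end{itemize}
It then observes that $p_i$ is the minimum variety price of every sold good, and concludes from uniqueness of equilibrium prices in costless Arctic auctions \citep{FinsterGoldbergLock2025}. This is short given the reduction, but the real content is outsourced to the cited uniqueness theorem, and the variety construction needs stepwise costs.

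Your log-price potential is an Eisenberg--Gale-type dual in which $s_ie^{q_i}$ is replaced by the profit function $\pi_i(e^{q_i})$. It has several advantages:
\begin{itemize}
\item It is self-contained: with $\pi_i(p)=ps_i$, every $t\mapsto\pi_i(e^t)$ is strictly convex, so it reproves the cited costless uniqueness.
\item It uses only convexity of the $C_i$, not piecewise linearity.
\item It shows directly that the set of equilibrium log-price vectors is convex.
\item It gives a sharper fact. You only use $z_i>0$ at the equilibrium where good $i$ is \emph{cheaper}, so a good sold in positive quantity in some equilibrium has there its highest price over all equilibria.
\end{itemize}

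One detail should be stated precisely. When the maximum bang per buck is below $1$, only the zero piece of the bidder term is active, so its subdifferential forces $\theta^j=0$. In general, $\theta^j$ is supported on bang-per-buck maximisers whose bang per buck is at least $1$. With that stated, your identification of subgradients with expenditure vectors of demanded bundles is exact.
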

\begin{proof}
Consider the transformation from \cref{section:cost-reduction} that replaces each marginal-cost step by a variety and adds the corresponding buyback bid. We first note that any competitive equilibrium $(\pb,\xb)$ of the auction with costs can be lifted to a competitive equilibrium of the transformed auction without costs in such a way that, for every good $i$ sold in positive quantity, the original price $p_i$ is the minimum price among the varieties of good $i$.

So suppose that $(\pb,\xb)$ is a competitive equilibrium of the auction with costs.  For each good $i$, refine the regular bidders' allocations of good $i$ into allocations of its varieties, preserving each bidder's total allocation of good $i$ and respecting each variety's supply, so that the aggregate quantity sold fills lower-cost varieties first. Auctioneer optimality in the original equilibrium already determines which varieties are filled this way; the refinement only chooses how the quantity of each filled variety is divided among the bidders, which is always possible since these quantities sum to the aggregate allocation of good $i$. If good $i$ is sold in positive quantity, let $l$ be the last variety of good $i$ that now receives a positive allocation from the regular bids. Set the prices $\widehat{p}_{ik}$ of each of these varieties.  If $k\leq l$ set $\widehat{p}_{ik}:=p_i$, and if $k>l$ set $\widehat{p}_{ik}:=c_{ik}$. The auctioneer's optimality conditions in the original equilibrium imply exactly the inequalities needed for these prices: earlier sold varieties have cost at most $p_i$, later unsold varieties have cost at least $p_i$, and if the last sold variety is only partly sold then $p_i=c_{il}$.

If good $i$ is not sold, set every variety price equal to its cost, $\widehat p_{ik}:=c_{ik}$. Original auctioneer optimality at zero sales implies $p_i\leq c_{i1}\leq c_{ik}$ for every $k$, so every variety of this unsold good is weakly more expensive than the original good. Since no regular bidder receives good $i$ in the original equilibrium, replacing good $i$ by these varieties cannot make any allocated regular bundle cease to be demanded.

Every zero-cost variety is fully allocated to regular bidders, since equilibrium prices are strictly positive and selling the entire zero-cost step is strictly profit-maximising. Thus every variety with unsold units has a corresponding buyback bid; allocate its unsold units to that bid. Regular bidders face the same bang per buck on every variety $k\leq l$ of a sold good as on the original good, while later varieties have price $\widehat{p}_{ik}=c_{ik}\geq p_i$; varieties of an unsold good have price at least its original price. Hence no new variety offers a better bang per buck, and the refined regular allocations remain demanded. The buyback bids demand their assigned amounts: if a variety price is above its cost then its buyback bid demands none of it, and if the price equals its cost then any amount up to its full supply is demanded. Thus the lifted outcome is a competitive equilibrium of the transformed costless auction, and for every good $i$ sold in positive quantity, $p_i$ is the minimum variety price for good $i$.

By \cref{observation:well-formed}, the transformed auction is a valid costless Arctic auction, so it has a unique equilibrium price vector \citep{FinsterGoldbergLock2025}. Therefore the minimum price among the varieties of any fixed good $i$ is unique. Since this minimum equals the original price in every equilibrium in which good $i$ is sold positively, the price of such a good is the same for each equilibrium.
\end{proof}

\begin{example}
\label{example:non-unique}
Consider two goods, each in supply $1$, and one bid with values $(1,1)$ and budget $\beta=1$. Good $1$ has marginal cost $0$ and good $2$ has marginal cost $2$, both up to their unit supply.

For every $p_2 \in [1,2]$, set $\pb=(1,p_2)$ and allocate one unit of good $1$ to the bidder. The bidder demands this bundle because good $1$ and money both have bang per buck $1$, while good $2$ has bang per buck $1/p_2 \leq 1$. The auctioneer also optimises: selling good $1$ is profitable, and the payoff from selling any amount $y$ of good $2$ is $(p_2-2)y \leq 0$. Thus every $\pb=(1,p_2)$ with $p_2 \in [1,2]$ is an equilibrium price vector.
\end{example}

\bibliographystyle{plainnat}
\bibliography{refs}

\end{document}